\documentclass[12pt]{article}

\usepackage{natbib}

\renewenvironment{thebibliography}[1]{%
\begin{oldthebibliography}{#1}
    \setlength{\parskip}{0.00mm}
    \setlength{\itemsep}{0.00mm}}
{\end{oldthebibliography}}
    
\usepackage[french,english]{babel} 
\usepackage[utf8]{inputenc} 
\usepackage{libertine}
\usepackage{textcomp}

\usepackage{tcolorbox}
\tcbuselibrary{breakable}

\usepackage{float}  

\usepackage{longtable}
\usepackage{dcolumn}
\usepackage{rotating}

\usepackage{framed}

\usepackage{comment}

\usepackage{amsthm,amssymb,mathrsfs,bm}
\usepackage[tbtags]{amsmath}
\usepackage{subeqnarray}
\usepackage{eurosym}
\usepackage{cancel}

\usepackage{relsize,exscale}
\usepackage{tabularx}
\usepackage{multirow}
\usepackage{booktabs}
\usepackage{graphicx}
\usepackage{nicematrix}
\usepackage{setspace}
\usepackage{lscape}
\usepackage{vmargin}
\usepackage{color}

\usepackage{mdframed} 

\definecolor{dark-blue}{RGB}{21,85,212}
\definecolor{light-blue}{RGB}{51,153,255}
\definecolor{dark-green}{RGB}{50,150,86}
\definecolor{dandelion}{RGB}{212,174,21}
\definecolor{bloud}{RGB}{212,21,21}
\definecolor{NewPurple}{RGB}{127,0,255}

\usepackage{url}
\usepackage{enumerate}
\usepackage{fancyhdr}

\usepackage{subcaption}

\usepackage{hyperref}
\hypersetup{
    colorlinks=true,
    linkcolor=blue,
    filecolor=magenta,      
    urlcolor=blue,
    citecolor=blue,
    pdftitle={Overleaf Example},
    pdfpagemode=FullScreen,
    }
    
\usepackage[hang,splitrule]{footmisc} 


\usepackage[sf,bf,tiny,center]{titlesec} 
\titlelabel{\thetitle.\enspace}
\usepackage{titletoc}

\usepackage{nicefrac}

\renewcommand{\thefootnote}{\fnsymbol{footnote}}
\setmarginsrb{50pt}{50pt}{50pt}{70pt}{20pt}{20pt}{20pt}{30pt}

\newcommand{\states}{\mathcal S}
\newcommand{\pop}{\mathcal N}
\newcommand{\incomes}{\mathcal Y}
\newcommand{\A}{\mathcal A}
\newcommand{\SocietiesFixed}{\mathcal F_\states}

\newcommand{\labels}{\mathcal I}
\newcommand{\types}{\mathcal T}
\newcommand{\domain}{\mathcal D}
\newcommand{\permutations}{\mathcal P}

\newcommand{\Ninteger}{\mathbb N}
\newcommand{\real}{\mathbb R}

\DeclareMathOperator*{\Wpref}{\succcurlyeq}

\DeclareMathOperator*{\defn}{=}
\DeclareMathOperator*{\argmin}{arg\,min}

\DeclareMathOperator{\MIN}{min}
\DeclareMathOperator{\var}{var}

\DeclareMathOperator*{\expect}{\mathbb{E}}

\newtheorem{theorem}{Theorem}

\newtheorem{axiom}{Axiom}

\newenvironment{axiomp} 
{%
  \addtocounter{axiom}{-1}%
  \begin{axiom}
}
{%
  \end{axiom}
  
}

\newtheorem{definition}{Definition}

\newtheorem{property}{Property}

\title{Opportunity-Sensitive Social Welfare\,\footnotemark[1]}

\author{Wienand, Tim \textcircled{r} Brice Magdalou \textcircled{r} Richard Nock \textcircled{r} Paul Hufe}

\begin{document}
\onehalfspacing
\maketitle

\footnotetext[1]{Random author order (more information \href{https://www.aeaweb.org/journals/policies/random-author-order/search?RandomAuthorsSearch\%5Bsearch\%5D=dKWV5sC\_Fdb6}{here}). \href{mailto:paul.hufe@bristol.ac.uk}{Paul Hufe}: 
University of Bristol, CESifo, HCEO, IFS, and IZA, Priory Road Complex, Bristol, BS81TU, UK. \href{mailto:brice.magdalou@umontpellier.fr}{Brice Magdalou}: CEE-M, Univ. Montpellier, CNRS, INRAE, Institut Agro, France. \href{mailto:richardnock@google.com}{Richard Nock}: Google Research. \href{mailto:wienand@ese.eur.nl}{Tim Wienand}: Erasmus University Rotterdam, Tinbergen Institute, Netherlands. We acknowledge financial support from ANR (ANR-24-CE26-3823-02) and UKRI (MR/X033333/1). We are grateful to seminar and workshop audiences at the University of Bristol, University of Montpellier, SOFI Stockholm, the French-Japanese webinar in Economics (FJWE), and the Normative Economics \& Economic Policy webinar (NE\&EP) for constructive feedback and suggestions. This paper has benefited from discussions with Mohammed Abdellaoui, Gilles Dufrénot, Francisco Ferreira, Nicolas Gravel, Markus J\"antti, and Peter Wakker. All remaining errors are our own.}


\renewcommand{\thefootnote}{\arabic{footnote}}

\pagestyle{fancy}
\fancyhf{}
\chead{\textsc{Opportunity-Sensitive Social Welfare}}
\renewcommand{\headrulewidth}{0pt}
\cfoot{\thepage} 

\begin{abstract}

\noindent We develop an axiomatic framework to evaluate income distributions from the perspective of an opportunity-egalitarian social planner. Building on a formal link with the literature on decision theory under ambiguity, we characterize a class of opportunity-sensitive social welfare functions based on a two-stage evaluation: the planner first computes the expected utility of income within each social type, where types consist of individuals sharing the same circumstances beyond their control, and then aggregates these type-specific welfare levels through a transformation reflecting aversion to inequality of opportunity. The evaluation is governed by a single parameter. We provide equivalent representations of the social welfare function, including a mean–divergence form that separates an efficiency term from an inequality term, and we establish an opportunity stochastic dominance criterion. Finally, we derive inequality measures that decompose overall inequality into within-group risk and between-group inequality of opportunity, providing a tractable basis for normative welfare analysis.

\vspace{0.4cm}
\noindent \textbf{JEL Classification Numbers}: D31, D63, I31.\\ \textbf{Keywords}: Inequality of Opportunity, Income Mobility, Welfarism, Utilitarianism, Social Welfare Measurement. 
\end{abstract}

\newpage


\section{Introduction}
\label{introduction}

Equality of opportunity is widely regarded as a fundamental ideal of distributive justice \cite[][]{RT16}. While inequalities resulting from individual effort or well-informed choices may be morally acceptable, they become unfair when driven by circumstances beyond the individual's control, such as parental background, race, or gender. Assessing the extent of these unfair inequalities is therefore essential to evaluating the distributive performance of a society and to guiding the design of fair public policies.

In this paper, we develop a new normative framework to evaluate income distributions from the perspective of an opportunity-egalitarian social planner. We characterize a novel class of opportunity-sensitive social welfare and inequality measures. These measures are uniquely governed by a single, transparent parameter that captures the planner's degree of aversion to inequality of opportunity. This framework offers a robust theoretical foundation for future empirical evaluations.

To fix ideas, consider a stylized example of two societies with exactly the same level of overall income inequality. In the first society, an individual's income prospects are entirely determined by parental wealth, whereas in the second, they are independent of family background. A utilitarian planner, who evaluates welfare solely on the basis of the marginal distribution of anonymous outcomes, would be strictly indifferent between the two. By contrast, an opportunity-egalitarian planner would regard the first society as substantially less desirable. Our framework captures this intuition by partitioning the population into distinct `types' based on observable circumstances and by evaluating the conditional income distributions faced by each type, while penalizing disparities across types according to the planner’s degree of aversion to inequality of opportunity.

Our main theoretical contribution is a complete axiomatic characterization of this opportunity-sensitive social welfare function. We first establish a baseline representation under the assumption of a uniform demographic distribution across types.\footnote{An example is a population partitioned into quantiles. This setting allows us to introduce our core axioms in their canonical form, with clear normative content.} This representation takes a double-expectation form: the welfare of each type is first evaluated through the expected utility of the income distribution it faces; this quantity is then transformed by a concave function capturing the social planner’s aversion to inequality of opportunity, before being aggregated across types by taking expectation over the demographic distribution. Under a strengthened set of axioms, the utility function is necessarily logarithmic, while the concave transformation belongs to a one-parameter exponential family. We then extend the analysis to societies with arbitrary and heterogeneous demographic structures. This leads to a highly tractable welfare criterion: the social welfare function admits four equivalent representations, each shedding light on the planner’s aversion to inequality of opportunity. We further exploit this structure to establish a robust `opportunity stochastic dominance' criterion, allowing for an unambiguous ranking of societies for a wide class of ethical preferences over inequality of opportunity. Finally, we derive normative measures of inequality of opportunity based on \cite{At70}'s concept of equally distributed equivalent income.

\textbf{Related literature}. Our paper contributes to two main strands of economic theory: the axiomatic foundations of social choice theory regarding inequality of opportunity, and the literature on decision theory under ambiguity. 

Regarding the first strand, our axiomatic characterization relates to the literature on assessing risky social situations. By interpreting the currently observed type-specific distribution as a reliable proxy for the `income prospect' of individuals belonging to that type, our approach is firmly embedded within the ex-ante equality of opportunity literature. Our framework departs from \cite{Fleurbaey10} in that we do not impose a priori that the social evaluation of a given type takes an expected utility form. Instead, we directly characterize the social planner's preferences over these income prospects, and we explicitly demonstrate how different variations of the von Neumann-Morgenstern independence axiom yield distinct functional forms for the social welfare function. Furthermore, while our social welfare function shares similarities with the `generalized utilitarianism' developed by \cite{GKPS10}, our axiomatic framework is less demanding. We achieve this by cleanly separating the core structural axioms in the baseline representation from the population issues associated with variable type partitions and heterogeneous demographic weights.\footnote{While these foundational references maintain a high degree of theoretical generality, we deliberately and progressively strengthen our axiomatic requirements to isolate a single-parameter family of social welfare functions. This parsimony yields a normative criterion that is readily applicable to empirical data.} By explicitly tackling the aggregation of unequally sized types, our framework also distinguishes itself from recent contributions by \cite{Moramarco2026} and \cite{AFGK26}, which leave the issue of demographic weights unaddressed. Finally, while grounded in an ex-ante perspective, our approach formally connects to ex-post egalitarianism \cite[][]{GKPS12} and ex-post prioritarianism \cite[][]{Adler25}.\footnote{For a recent and insightful exploration of the boundaries between welfarism and equality of opportunity, see \cite{Carroll25}, who characterizes the conditions under which these two criteria coincide in resource allocation problems.}

Our axiomatic characterization also bridges the literature on inequality of opportunity with the literature on ambiguity aversion and variational preferences. Formally, our setup is equivalent to the \cite{AA63} framework: the `states of the world' correspond to the social types, while the lotteries assigned to each state are type-specific income distributions. However, rather than subjective beliefs over a state space, our probability distribution represents perfectly known, objective demographic weights across the types. Our framework is also entirely finite and discrete, both in states and lotteries, to directly accommodate empirical data. Our contribution to this literature proceeds in three steps. First, assuming a uniform probability distribution over states, our baseline representation characterizes the `second-order utility' model \cite[][]{KMM05,EG09, GPS09} relying on mild axiomatic requirements. Second, we demonstrate that strengthening our initial von Neumann-Morgenstern independence condition (originally restricted to constant acts) with a requirement akin to the 'weak certainty independence' of \cite{MMR06}, establishes a formal link between their variational preferences model and the second-order utility model. Specifically, we recover the unique intersection, known as the multiplier preferences of \cite{HS01}. While this result was first elegantly identified by \cite{St11}, our simpler framework offers a more direct and transparent axiomatic derivation. Third, and novel to the best of our knowledge, we translate the \cite{HS01} representation first into an intuitive mean-variance approximation over state-specific expected utilities,\footnote{See \cite{MMR13} for a comparable mean-variance approximation. However, because their analysis is embedded in the more general second-order utility model, their resulting representation is less parsimonious than the explicit formula we obtain in our specific framework.} and ultimately into an exact, closed-form mean-divergence representation.

\textbf{Outline of the paper}. The remainder of the paper is organized as follows. Section~\ref{sec-setup} sets up the formal framework. Section~\ref{sec-preferences} introduces the baseline axioms governing the planner's preferences over uniform type partitions. Section~\ref{sec-representation-uniform} derives the main representation theorems for the social welfare function in this uniform setting, while Section~\ref{sec-representation-extended} extends this characterization to the domain of arbitrary demographic distributions across types. Section~\ref{sec-representation-equivalent} explores equivalent representations of the welfare criterion to clarify the concept of inequality-of-opportunity aversion, and establishes an opportunity stochastic dominance criterion. Section~\ref{sec-properties} examines the basic properties of the welfare function and formally analyzes comparative inequality-of-opportunity aversion. Finally, Section~\ref{sec-inequality} constructs the corresponding inequality indices, decomposing overall inequality into social risks and inequality of opportunity, and Section~\ref{sec-conclusion} concludes.

\section{Setup}
\label{sec-setup}
In this section, we present the main concepts that underpin our analysis and introduce the relevant notation. For a finite set $\A$, we indicate by $|\A|$ its cardinality, and by $\Delta(\A)$ the set of simple probability measures on $(\A,\Sigma_\A)$, where $\Sigma_\A$ is the algebra of all subsets of $\A$. For a probability distribution $p \in \Delta(\A)$, we denote by $p(a)$ the probability of having $a \in \A$. All sets are assumed to be finite.

Consider a population $\pop$ consisting of at least two individuals. A \emph{type} $s \in \states$, with $\states$ a partition of~$\pop$, is a population subgroup sharing characteristics that give rise to unequal opportunities. In our exposition, we focus on subgroups defined by parental income; however, the framework is general enough to incorporate alternative sources of unequal opportunities, such as biological sex, ethnicity, or neighborhood quality during childhood. We do not impose an order relation over types. We denote by $q \in \Delta(\states)$ the distribution of individuals over $\states$, known by the social planner.

An \emph{income distribution} is denoted by $\pi \in \Delta(\incomes)$, with $\incomes \subseteq (0,\infty)$ the set of possible incomes. The mean income is $\mu(\pi) = \sum_{y \in \incomes} \pi(y) y$, and the geometric mean is $G(\pi) = \prod_{y \in \incomes} y^{\pi(y)}$. Type-specific income distributions are denoted by $\pi_s \in \Delta(\incomes)$, for all $s \in \states$. 

An \emph{opportunity profile} $f \in \SocietiesFixed$ is a continuous mapping from $\states$ to $\Delta(\incomes)$, assigning a type-specific income distribution $f(s) = \pi_s$ to each $s\in \states$, with $\SocietiesFixed$ indicating the set of all profiles for a given partition~$\states$. Denoting by $\expect\!_q$ the expectation operator according to $q \in \Delta(\states)$, the income distribution across the entire population is $\expect_q[f] = \sum_{s \in \states} q(s) \pi_s$, and the mean income is $\mu \left( \expect_q[f] \right) = \sum_{s \in \states} q(s) \mu(\pi_s)$. 
An equal-opportunity profile that assigns the same income distribution $\pi \in \Delta(\incomes)$ to all types $s \in \states$, is denoted by $f_\pi \in \SocietiesFixed$. 

Finally, a \emph{society} is a pair $(f,q) \in \SocietiesFixed \times \Delta(\states)$.

\noindent \textsc{Example}. Suppose there are two types in society: 20\% of the population have rich parents (R), and 80\% of the population have poor parents (P). The set of types is therefore $\states$ = \{R, P\}. We assume the set of possible incomes is $\incomes$ = \{\euro 1, \euro 2\}. The type-specific income distributions are $\pi_{\textnormal{R}}$ = (0.1, 0.9) and $\pi_\textnormal{P}$ = (0.6, 0.4). This society is characterized by the following opportunity profile $f$ and type distribution $q$:
\[
f\ \defn\ 
\begin{pNiceMatrix}[first-row, first-col]
               & \textnormal{\euro 1} & \textnormal{\euro 2} \\
\textnormal{R} & $0.1$         & $0.9$         \\
\textnormal{P} & $0.6$         & $0.4$         \\
\end{pNiceMatrix}\,,\quad \textnormal{and}\quad 
 q \defn\ \begin{pNiceMatrix}
  $0.2$ \\
  $0.8$ \\
\end{pNiceMatrix}\,.
\]
The overall income distribution is $\expect_q[f]$ = (0.5, 0.5) = 0.2 $\times$ (0.1, 0.9) + 0.8 $\times$ (0.6, 0.4), with mean income $\mu(\expect_q[f])$ = \euro 1.5. Moreover, the mean incomes of individuals with rich and poor parents are $\mu(\pi_{\textnormal{R}})$ = \euro 1.9 and $\mu(\pi_{\textnormal{P}})$ = \euro 1.4, respectively.

\section{Social planner's preferences}
\label{sec-preferences}
Initially, we assume the distribution over types $q \in \Delta(\states)$ is fixed and uniform for all societies, such that $q(s) = 1 / |\states|$ for all $s \in \states$. This assumption is restrictive but allows the main axioms to be formulated in their canonical form. It will be relaxed in Section~\ref{sec-representation-extended}. 

The social planner's preferences are represented by a binary relation $\Wpref$ on $\SocietiesFixed$: $f \Wpref f'$ means that opportunity profile $f \in \SocietiesFixed$ is weakly preferred to profile $f' \in \SocietiesFixed$, given the uniform distribution of types in $\states$. We denote by $\sim$ and $\succ$ the symmetric and asymmetric parts of $\Wpref$, respectively. We assume that $\Wpref$ is a weak order (transitive and complete) and continuous, in the sense that the upper and lower contour sets are closed. Under these assumptions, there exists a continuous social welfare function $W: \SocietiesFixed \rightarrow \real$ which represents $\Wpref$ so that, for all $f, f' \in \SocietiesFixed$, we have $f \Wpref f'\ \Longleftrightarrow \ W(f) \geq W(f')$. 

The first axiom we impose on $\Wpref$ reflects unanimity in social improvement: If the social planner prefers all type-specific income distributions of one opportunity profile over the type-specific income distributions of another, then they must prefer the former profile to the latter. 
\begin{axiom}[\textsc{Strict Pareto}] \label{ax-Pareto}
For all $f = {\left( \pi_s \right)}_{s \in \states} \in \SocietiesFixed$ and all $f'={\left( \pi'_s \right)}_{s \in \states} \in \SocietiesFixed$, we have: $\forall s \in \states$, $f_{\pi_s} \Wpref f_{\pi'_s}\ \Rightarrow\ f \Wpref f'$. If, in addition, there exists at least one $s \in \states$ such that $f_{\pi_s} \succ f_{\pi'_s}$, then $f \succ f'$.
\end{axiom}

The second axiom is a `row separability condition'. It states that the ranking of opportunity profiles does not depend on types that have the same income distribution in both profiles. For any $f \in \SocietiesFixed$ and any $\pi \in \Delta(\incomes)$, let $(f_{-s},\pi) \in \SocietiesFixed$ be the opportunity profile such that $(f_{-s},\pi)(t) = \pi_t$ for all $t \neq s$ and $(f_{-s},\pi)(s) = \pi$.
\begin{axiom}[\textsc{Independence to common type-specific distributions}] \label{ax-ICCD}
For all $f,f'  \in \SocietiesFixed$ and all $\pi, \pi' \in \Delta(\incomes)$,\quad $(f_{-s}, \textcolor{dark-blue}{\pi}) \Wpref\, (f'_{-s}, \textcolor{dark-blue}{\pi})\ \Rightarrow\ (f_{-s}, \textcolor{dark-green}{\pi'}) \Wpref\, (f'_{-s}, \textcolor{dark-green}{\pi'})$.
\end{axiom}
Consider the following example:
\[
\begin{pNiceMatrix}[first-row, first-col] 
               & \textnormal{\euro 1} & \textnormal{\euro 2} \\
\textnormal{R} & 0.2                        & 0.8 \\
\textnormal{P} & \textcolor{dark-blue}{0.3} & \textcolor{dark-blue}{0.7} \\
\end{pNiceMatrix}
\quad \Wpref\quad
\begin{pNiceMatrix}[first-row, first-col] 
               & \textnormal{\euro 1} & \textnormal{\euro 2} \\
\textnormal{R} & 0.4                        & 0.6 \\
\textnormal{P} & \textcolor{dark-blue}{0.3} & \textcolor{dark-blue}{0.7} \\
\end{pNiceMatrix}\quad
\Longrightarrow\quad
\begin{pNiceMatrix}[first-row, first-col] 
               & \textnormal{\euro 1} & \textnormal{\euro 2} \\
\textnormal{R} & 0.2                        & 0.8 \\
\textnormal{P} & \textcolor{dark-green}{0.4} & \textcolor{dark-green}{0.6} \\
\end{pNiceMatrix}
\quad \Wpref\quad
\begin{pNiceMatrix}[first-row, first-col] 
               & \textnormal{\euro 1} & \textnormal{\euro 2} \\
\textnormal{R} & 0.4                        & 0.6 \\
\textnormal{P} & \textcolor{dark-green}{0.4} & \textcolor{dark-green}{0.6} \\
\end{pNiceMatrix}\,.
\]
This example also illustrates the planner's fundamental trade-off between equalizing opportunities and improving average outcomes (also see A\ref{ax-mon} below). Here, $(f_{-s}, \textcolor{dark-green}{\pi'})$ is preferred to $(f'_{-s}, \textcolor{dark-green}{\pi'})$ despite $(f'_{-s}, \textcolor{dark-green}{\pi'})$ equalizing type-specific income distributions, noting that income for 20\% of individuals with rich parents is higher in $(f_{-s}, \textcolor{dark-green}{\pi'})$ than $(f'_{-s}, \textcolor{dark-green}{\pi'})$.

The third axiom specifies that type identities are irrelevant for the social planner's evaluations. Specifically, it states that permuting the rows of an opportunity profile does not affect social welfare. Let $\permutations$ be the set of all permutations $\sigma: \states \rightarrow \states$.
\begin{axiom}[\textsc{Anonymity}] \label{ax-anonymity}
For all $f, f' \in \SocietiesFixed$, if there exists $\sigma \in \permutations$ such that $\pi_{\sigma(s)} = \pi'_s$ for all $s \in \states$, then $f \sim f'$.
\end{axiom}
Consider the following example:
\begin{equation} \nonumber
f\ =\ 
\begin{pNiceMatrix}[first-row, first-col]
               & \textnormal{\euro 1} & \textnormal{\euro 2} \\
\textnormal{R} & 1         & 0         \\
\textnormal{P} & 0         & 1         \\
\end{pNiceMatrix}\quad
\sim\quad
\begin{pNiceMatrix}[first-row, first-col] 
               & \textnormal{\euro 1} & \textnormal{\euro 2} \\
\textnormal{R} & 0         & 1         \\
\textnormal{P} & 1         & 0         \\
\end{pNiceMatrix}\ =\ f'\,.
\end{equation}
This example illustrates that A\ref{ax-anonymity} rules out some plausible egalitarian preferences. For example, one may argue that $f$ should be preferred to $f'$ because it reduces inequalities between dynasties by re-ranking rich and poor across generations. However, in this work, we evaluate individual opportunity sets rather than the fortunes of multigenerational dynasties. From this vantage point, type labels carry no moral weight beyond the opportunity set attached to them.

The fourth axiom captures inequality-of-opportunity aversion. It states that any convergence of the income distributions of two types reduces inequality of opportunity.\footnote{This axiom could be weakened by limiting its application to $\alpha =0$, while retaining the same implications.}
\begin{axiom}[\textsc{Inequality of opportunity aversion}] \label{ax-aversion}
For all $f,f'\in \SocietiesFixed$ and all $\alpha \in [0,1)$, if there exist 
$s, s' \in \states$ such that $\pi_t = \pi'_t$ for all $t \neq s, s'$ and, furthermore, $\pi_s = \alpha \textcolor{dark-blue}{\pi'_s} + (1-\alpha) \textcolor{bloud}{\pi'_{s,s'}}$ and $\pi_{s'} = \alpha \textcolor{dark-green}{\pi'_{s'}} + (1-\alpha) \textcolor{bloud}{\pi'_{s,s'}}$ with $\textcolor{bloud}{\pi'_{s,s'}} = \left( \nicefrac{1}{2} \right) (\textcolor{dark-blue}{\pi'_s} + \textcolor{dark-green}{\pi'_{s'}})$, then $f \Wpref f'$.
\end{axiom}
Consider the following example where we set $\alpha =0$ so that~$\textcolor{dark-blue}{\pi'_s}$ and~$\textcolor{dark-green}{\pi'_{s'}}$ are replaced by their average~$\textcolor{bloud}{\pi'_{s,s'}}$:
\[
f\ =\ 
\begin{pNiceMatrix}[first-row, first-col]
               & \textnormal{\euro 1} & \textnormal{\euro 2} \\
\textnormal{R} & \textcolor{bloud}{0.3} & \textcolor{bloud}{0.7}  \\
\textnormal{P} & \textcolor{bloud}{0.3} & \textcolor{bloud}{0.7} \\
\end{pNiceMatrix}
\quad
\Wpref\quad
\begin{pNiceMatrix}[first-row, first-col]
               & \textnormal{\euro 1} & \textnormal{\euro 2} \\
\textnormal{R} & \textcolor{dark-blue}{0.2}  & \textcolor{dark-blue}{0.8}  \\
\textnormal{P} & \textcolor{dark-green}{0.4} & \textcolor{dark-green}{0.6} \\
\end{pNiceMatrix}\ =\ f'\,.
\]
This example also illustrates that our axiom on inequality of opportunity aversion has no immediate implications on overall inequality. To see this note that $\expect_q[f] = \expect_q[f'] = (0.3, 0.7)$, illustrating that inequality of opportunity has been reduced in $f$ without affecting total inequality.  

The fifth axiom is a `column separability condition'. It is analogous to the independence axiom in the expected utility model of \citet{Morgenstern1944}. Specifically, it states that mixing two equal-opportunity profiles with a common equal-opportunity profile does not alter the ranking of the first two profiles. 
\begin{axiom}[\textsc{Independence to common equal opportunities}] \label{ax-ICEO}
For all $\pi,\pi',\pi'' \in \Delta(\incomes)$ and all $\alpha \in (0,1)$,\quad $f_\pi \Wpref\, f_{\pi'}\ \Rightarrow\ \alpha f_\pi + (1-\alpha) {\textcolor{dark-blue}{f_{\pi''}}} \Wpref\, \alpha f_{\pi'} + (1-\alpha) {\textcolor{dark-blue}{f_{\pi''}}}$.
\end{axiom}
Consider the following example: 
\[
\begin{pNiceMatrix}
0.2 & 0.8\\
0.2 & 0.8\\
\end{pNiceMatrix}
\ \Wpref\
\begin{pNiceMatrix}
0.4 & 0.6\\
0.4 & 0.6\\
\end{pNiceMatrix}
\ \
\Longrightarrow\ \ \alpha 
\begin{pNiceMatrix}
0.2 & 0.8\\
0.2 & 0.8\\
\end{pNiceMatrix}
+(1-\alpha) 
\begin{pNiceMatrix}
\textcolor{dark-blue}{1} & \textcolor{dark-blue}{0} \\
\textcolor{dark-blue}{1} & \textcolor{dark-blue}{0} \\
\end{pNiceMatrix}
\ \Wpref\ \alpha
\begin{pNiceMatrix}
0.4 & 0.6\\
0.4 & 0.6\\
\end{pNiceMatrix}
+(1-\alpha) 
\begin{pNiceMatrix}
\textcolor{dark-blue}{1} & \textcolor{dark-blue}{0} \\
\textcolor{dark-blue}{1} & \textcolor{dark-blue}{0} \\
\end{pNiceMatrix}\,.
\]
Axioms \ref{ax-Pareto}---\ref{ax-ICEO} underpin our first result (Theorem~\ref{theo-general}), which characterizes a non-parametric social welfare function that varies with both the planner’s inequality of opportunity aversion and the expected utility of each type. The following axioms provide additional structure. In particular, they restrict inequality-of-opportunity aversion to a single parameter (Theorem~\ref{theo-main}) and impose a logarithmic utility function in income (Theorem~\ref{theo-ln}).

The next axiom strengthens A\ref{ax-ICEO} by relaxing the requirement that the initial profiles must be equal-opportunity profiles. Therefore, it implies A\ref{ax-ICEO}, but the converse does not hold.
\begin{axiomp}[\textsc{Strong independence to common equal opportunities}] \label{ax-ICEO2}
For all $f,f' \in \SocietiesFixed$, all $\pi, \pi' \in \Delta(\incomes)$ and all $\alpha \in (0,1)$,\quad $\alpha f + (1-\alpha) {\textcolor{dark-blue}{f_\pi}} \Wpref\, \alpha f' + (1-\alpha) {\textcolor{dark-blue}{f_\pi}} \Rightarrow \alpha f + (1-\alpha) {\textcolor{dark-green}{f_{\pi'}}} \Wpref\, \alpha f' + (1-\alpha) {\textcolor{dark-green}{f_{\pi'}}}$.
\end{axiomp}
Consider the following example:
\[
\begin{aligned}
& \alpha 
\begin{pNiceMatrix}
0.2 & 0.8\\
0.3 & 0.7 \\
\end{pNiceMatrix}
+(1-\alpha) 
    \begin{pNiceMatrix}
    \textcolor{dark-blue}{1} & \textcolor{dark-blue}{0} \\
    \textcolor{dark-blue}{1} & \textcolor{dark-blue}{0} \\
    \end{pNiceMatrix}
\ \Wpref\ 
\begin{pNiceMatrix}
0.4 & 0.6\\
0.1 & 0.9 \\
\end{pNiceMatrix}
+(1-\alpha) 
    \begin{pNiceMatrix}
    \textcolor{dark-blue}{1} & \textcolor{dark-blue}{0} \\
    \textcolor{dark-blue}{1} & \textcolor{dark-blue}{0} \\
    \end{pNiceMatrix} \\
\Longrightarrow \quad & \alpha 
\begin{pNiceMatrix}
0.2 & 0.8\\
0.3 & 0.7 \\
\end{pNiceMatrix}
+(1-\alpha) 
    \begin{pNiceMatrix}
    \textcolor{dark-green}{0} & \textcolor{dark-green}{1} \\
    \textcolor{dark-green}{0} & \textcolor{dark-green}{1} \\
    \end{pNiceMatrix}
\ \Wpref\
\begin{pNiceMatrix}
0.4 & 0.6\\
0.1 & 0.9 \\
\end{pNiceMatrix}
+(1-\alpha) 
    \begin{pNiceMatrix}
    \textcolor{dark-green}{0} & \textcolor{dark-green}{1} \\
    \textcolor{dark-green}{0} & \textcolor{dark-green}{1} \\
    \end{pNiceMatrix}\,.
\end{aligned}
\]

Under this stronger formulation, if society $f$ is preferred to $f'$ when mixed with an equal-opportunity profile $\textcolor{dark-blue}{f_\pi}$, this preference is assumed invariant to replacing $\textcolor{dark-blue}{f_{\pi}}$ with any other equal-opportunity profile~$\textcolor{dark-green}{f_{\pi'}}$. Essentially, it is assumed that the planner’s evaluation of opportunity profiles $f$ and $f'$ does not depend on the background equal-opportunity profile with which they are combined.

The sixth axiom states that, in the absence of inequality of opportunity, social welfare increases with additional income, regardless of the individual who receives it. Specifically, among equal-opportunity profiles, one profile is preferred to another if it can be obtained from the latter through a `favorable probability shift'.
\begin{axiom}[\textsc{Monotonicity}] \label{ax-mon}
For all $\pi, \pi' \in \Delta(\incomes)$, if there exist $\epsilon > 0$ and two incomes $x < x'$ such that $\pi(y) = \pi'(y)$ for all $y \neq x, x'$ and, furthermore, $\pi(x) = \pi'(x) - \epsilon$ and $\pi(x') = \pi'(x') + \epsilon$, then $f_{\pi} \succ\, f_{\pi'}$.
\end{axiom}
Consider the following example:
\[
f_{\pi}\ =\ 
\begin{pNiceMatrix}[first-row, first-col]
               & \textnormal{\euro 1} & \textnormal{\euro 2} \\
\textnormal{R} & 0.2 & 0.8 \\
\textnormal{P} & 0.2 & 0.8
\end{pNiceMatrix}
\quad \succ \quad
\begin{pNiceMatrix}[first-row, first-col]
               & \textnormal{\euro 1} & \textnormal{\euro 2} \\
\textnormal{R} & 0.4 & 0.6 \\
\textnormal{P} & 0.4 & 0.6
\end{pNiceMatrix}
\ =\ f_{\pi'}\ .
\]
In this example, $f_{\pi}$ is preferred to $f_{\pi'}$, because it is obtained from the latter by allocating an additional \euro1 to 20\% of the low-income individuals irrespective of their type.

The seventh axiom requires that multiplying all incomes by a positive constant (e.g., converting between currencies) should not affect the ranking of two opportunity profiles. For any profile  $f \in \SocietiesFixed$ and any $\lambda > 0$, we define a new profile $\lambda f$ by $(\lambda f)(s) = \pi^\lambda_s$ where, for any $s \in \states$ and any $y \in \incomes$, $\pi^\lambda_s(y) = \pi_s(y/\lambda)$. That is, profile  $\lambda f$ corresponds to $f$  where the income of all individuals is multiplied by $\lambda$. 
\begin{axiom}[\textsc{Scale invariance}] \label{ax-scale}
For all $f,f' \in \SocietiesFixed$,\quad 
$f \Wpref, f'\ \Rightarrow\ \forall \lambda > 0\,, \lambda f \Wpref\, \lambda f'$.
\end{axiom}
Consider the following example, where we set $\lambda = 3$:
\[
\begin{pNiceMatrix}[first-row, first-col]
               & \textnormal{\euro 1} & \textnormal{\euro 2} \\
\textnormal{R} & 0.2 & 0.8\\
\textnormal{P} & 0.2 & 0.8\\
\end{pNiceMatrix}
\ \Wpref\
\begin{pNiceMatrix}[first-row, first-col]
               & \textnormal{\euro 1} & \textnormal{\euro 2} \\
\textnormal{R} & 0.4 & 0.6\\
\textnormal{P} & 0.4 & 0.6\\
\end{pNiceMatrix}
\quad
\Longrightarrow\quad 
\begin{pNiceMatrix}[first-row, first-col]
               & \textnormal{\euro 3} & \textnormal{\euro 6} \\
\textnormal{R} & 0.2 & 0.8\\
\textnormal{P} & 0.2 & 0.8\\
\end{pNiceMatrix}
\ \Wpref\
\begin{pNiceMatrix}[first-row, first-col]
               & \textnormal{\euro 3} & \textnormal{\euro 6} \\
\textnormal{R} & 0.4 & 0.6\\
\textnormal{P} & 0.4 & 0.6\\
\end{pNiceMatrix}\,.
\]

\section{Representation with a uniform type distribution}
\label{sec-representation-uniform}

In this section, we derive the representation results in the baseline setting and relate them to the existing literature. We progressively strengthen the axiomatic framework to obtain increasingly sharp characterizations, ultimately leading to a one-parameter social welfare function that captures the planner’s aversion to inequality of opportunity. All proofs of theorems and properties are provided in Appendix~\ref{app: proofs}. 

\begin{theorem} \label{theo-general} Let $q(s) = 1/ |\states|$ for all $s \in \states$, and $|\states| \geq 3$.\footnote{Theorem~\ref{theo-general} and all other theorems in this section also work for $|\states| = 2$ when invoking an additional axiom called the `Thomsen condition' \cite[see][]{Wa88,De60}.} The social planner's preferences relation $\Wpref$ satisfies A\ref{ax-Pareto}-A\ref{ax-ICEO} iff it can be represented by a function $W_{\phi, u, q} : \SocietiesFixed \rightarrow \real$, defined by:
    \begin{equation} \label{EDEU-general-representation}
        W_{\phi, u, q}(f) = \phi^{-1} \left(  \sum_{s \in \states} q(s) \phi \left( U(\pi_s) \right) \right)\,,
    \end{equation}
where $U(\pi_s) = \sum_{y \in \incomes} \pi_s(y) u(y)$ is the expected utility of type $s$ with a utility function $u : \incomes \rightarrow \real$, continuous and non-constant, and $\phi$ is a strictly increasing and weakly concave function. Moreover, two functions $W_{\phi, u, q}$ and $W_{\phi', u', q}$ represent the same preferences iff 
there exist $a, A > 0$ and $b, B \in \real$ such that $u' = a u + b$ and $\phi'(a v + b) = A \phi(v) + B$, for all $v$ in the range of $u$.
\end{theorem}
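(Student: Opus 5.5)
The plan is to reduce the problem in three layers: a von Neumann–Morgenstern representation on equal-opportunity profiles, a reduction of $W$ to a function of the vector of type-specific expected utilities, and an additive (Debreu) representation of that function. Anonymity then symmetrizes the additive representation, and inequality-of-opportunity aversion delivers concavity.

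\emph{Step 1 (expected utility on equal-opportunity profiles).} Restrict $\Wpref$ to the profiles $f_\pi$, $\pi\in\Delta(\incomes)$. This is a mixture set isomorphic to $\Delta(\incomes)$, since $\alpha f_\pi+(1-\alpha)f_{\pi'}=f_{\alpha\pi+(1-\alpha)\pi'}$. The restriction is a continuous weak order, and A\ref{ax-ICEO} is exactly the vNM independence axiom on it. I would invoke the Herstein–Milnor mixture-space theorem to obtain an affine $U(\pi)=\sum_y\pi(y)u(y)$ with $f_\pi\Wpref f_{\pi'}\iff U(\pi)\ge U(\pi')$, unique up to positive affine transformations.
\begin{itemize}
\item The function $u$ is continuous automatically, because $\incomes$ is finite.
\item If $u$ were constant, every equal-opportunity profile would be indifferent, and by A\ref{ax-Pareto} so would every profile. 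I would treat this trivial case separately, or exclude it by nondegeneracy, so that $u$ is non-constant.
\item The range of $U$ is the compact interval $I=[\min u,\max u]$, since $\Delta(\incomes)$ is connected and $U$ is continuous.
\end{itemize}

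\emph{Step 2 (reduction to utility vectors).} By A\ref{ax-Pareto}, if $U(\pi_s)=U(\pi'_s)$ for all $s$, then $f\sim f'$. Hence there is a function $V:I^{|\states|}\to\real$ with $W(f)=V\big((U(\pi_s))_s\big)$. The strict part of A\ref{ax-Pareto} makes $V$ strictly increasing in each coordinate. Continuity of $V$ follows from continuity of $\Wpref$, using continuous selections $v\mapsto\pi$ with $U(\pi)=v$, for instance by mixing the best and worst degenerate lotteries. Moreover, A\ref{ax-ICCD} translates into coordinatewise (row) separability of the order induced by $V$ on $I^{|\states|}$.

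\emph{Step 3 (additive representation, symmetry, concavity).} This is the main obstacle. I would apply Debreu's (1960) additive-representation theorem, in the Wakker (1988) form for products of connected topological spaces. Its hypotheses are met as follows: there are at least three coordinates, the domain is a connected product, the order is continuous, and by Step 2 it has separable coordinates that are all essential (essentiality comes from strict monotonicity). The theorem yields continuous, strictly increasing $\phi_s$ such that $V$ is ordinally equivalent to $\sum_s\phi_s(v_s)$, unique up to a common positive scale and separate translations. The delicate points are verifying that separability of every single coordinate suffices to give separability of every subset, which is the content of Debreu/Gorman's result once $|\states|\ge3$, and that the compact boundary of $I$ causes no problem.

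Next, A\ref{ax-anonymity} says the additive representation is invariant under coordinate transpositions. By the uniqueness clause, $\phi_{\sigma(s)}=c\,\phi_s+d_s$ for some $c>0$. Iterating a transposition forces $c=1$, and after normalizing the translations we get $\phi_s=\phi$ for all $s$. Hence $W$ is ordinally equivalent to $\sum_s\frac{1}{|\states|}\phi(U(\pi_s))$, and applying $\phi^{-1}$ gives the form (\ref{EDEU-general-representation}).

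For concavity, apply A\ref{ax-aversion} with $\alpha=0$. The affinity of $U$ gives $U(\pi'_{s,s'})=\tfrac12(v+w)$, where $v=U(\pi'_s)$ and $w=U(\pi'_{s'})$, so $2\phi(\tfrac{v+w}{2})\ge\phi(v)+\phi(w)$ for all $v,w\in I$. Midpoint concavity together with continuity then yields concavity of $\phi$.

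\emph{Converse and uniqueness.} For the converse, I would check each axiom directly:
\begin{itemize}
\item A\ref{ax-Pareto} and A\ref{ax-anonymity} are immediate.
\item A\ref{ax-ICCD} holds because of the additive form.
\item A\ref{ax-ICEO} holds because $W(f_\pi)=U(\pi)$ is affine.
\item For A\ref{ax-aversion} with general $\alpha$, the pair of utilities becomes $(\alpha v+(1-\alpha)m,\ \alpha w+(1-\alpha)m)$ with $m=\tfrac{v+w}2$. This pair has the same sum as $(v,w)$ and is less spread, so concavity of $\phi$ gives the inequality.
\end{itemize}
Uniqueness combines the two uniqueness results. The vNM step gives $u'=au+b$. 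Additive uniqueness, applied to $\phi'\circ U'$ versus $\phi\circ U$, gives $\phi'(av+b)=A\phi(v)+B$ on the range of $u$.
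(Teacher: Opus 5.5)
Your proof is correct, but it orders the steps differently from the paper.

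\textbf{The paper's route.} It applies Wakker's Theorem~4.1 directly on $\SocietiesFixed=\Delta(\incomes)^{|\states|}$, which is connected and separable. A\ref{ax-Pareto} is used only for essentiality, and A\ref{ax-ICCD} serves as coordinate independence. This yields $W(f)=\sum_s H_s(\pi_s)$. Anonymity then enters by direct substitution, $H_s(\pi)+H_{s'}(\pi')=H_s(\pi')+H_{s'}(\pi)$, so the $H_s$ differ only by constants. Only at the end does vNM on the diagonal appear: $H(\pi)=W(f_\pi)$ and $U$ are ordinally equivalent, hence $H=\phi\circ U$.

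\textbf{Your route.} You run vNM first, use A\ref{ax-Pareto} to factor $W$ through the vector of type expected utilities, and apply the additive theorem on the box $I^{|\states|}$.

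\textbf{What each buys.} The paper's order spares you the factorization step. It also spares you transferring continuity and A\ref{ax-ICCD} to the induced order on $I^{|\states|}$, which is your continuous-selection step. Its anonymity argument is also shorter than going through the uniqueness clause. Your order makes explicit from the outset that only type-level expected utilities matter. It also delivers continuous, strictly increasing real components on an interval. Continuity of $\phi$, needed to pass from midpoint to full concavity, is then immediate.

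\textbf{Where you are more careful.} First, the completely indifferent relation satisfies A\ref{ax-Pareto}--A\ref{ax-ICEO} vacuously, yet it has no representation with non-constant $u$. So nondegeneracy must be assumed, and the paper's ``from A\ref{ax-Pareto}, all the types are essential'' silently presumes it. Second, you actually check the converse, including A\ref{ax-aversion} for general $\alpha$, which the paper leaves to the reader.

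\textbf{One wording fix.} What A\ref{ax-ICCD} delivers is separability of each complement $\states\setminus\{s\}$. That is Wakker's coordinate-independence hypothesis, and it is what Gorman's overlap argument uses when $|\states|\ge 3$. Separability of singletons is vacuous for a strictly increasing $V$ and would not suffice.
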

Theorem~\ref{theo-general} provides a social welfare function based on a two-stage evaluation: expected utility is first computed within each type, and these type-specific welfare levels are then aggregated through an increasing and concave transformation capturing inequality-of-opportunity aversion. The first-stage utility function is left entirely unrestricted. Moreover, by treating the income distribution of their own type as the relevant income prospect of individuals, our framework naturally aligns with the ex-ante equality of opportunity literature \cite[][]{FP13}. Importantly, this ex-ante perspective emerges from the axioms rather than being imposed a priori. 

This result establishes a link between the literatures on inequality of opportunity and ambiguity aversion. In decision making under ambiguity, related results have been provided by \cite{EG09}, \cite{GPS09} and, in a continuous setting, by \cite{KMM05}, for what is commonly known as the `second-order utility' model.\footnote{The dynamic choice model of \cite{KP78} shares this recursive structure. However, it relies on temporal consistency to evaluate the timing of uncertainty resolution, whereas our static representation builds on an independence axiom to compare societies.} The main difference in our approach is that the distribution of individuals over types, namely $q \in \Delta(\states)$ is discrete, uniform, and known to the social planner (in this literature, $q$ captures the decision maker's subjective beliefs over the state space). Anonymity (A\ref{ax-anonymity}), which did not appear in these papers, is an important building block in obtaining our result. Furthermore, Inequality of Opportunity Aversion (A\ref{ax-aversion}) differs from the usual axioms of uncertainty aversion \cite[see also][]{GS89, GM02,  MMR06}, in that ours is defined locally between two types, comparable to the Pigou-Dalton transfer principle in the standard literature on inequality measurement.\footnote{An axiom named `inequity aversion', comparable to uncertainty aversion and used to characterize a social welfare function in a different setting, has been introduced by~\cite{BP23}.} Our result provides the canonical form for this class of models.

The social welfare function presented in Theorem~\ref{theo-general} also shares the `generalized utilitarian' structure found in \cite{GKPS10}, while placing it on different normative foundations. \cite{GKPS10} focus on an impartial observer's preference for objective risks (called `life chances') over hypothetical identity lotteries (called `accidents of birth'). Notably, the independence axiom they consider (called `independence over identity lotteries') is more demanding than A\ref{ax-ICCD}: It affects all type-specific income distributions simultaneously while allowing variable type weights $q \in \Delta(\states)$. We believe that fixing the types and associated weights at the outset clarifies the normative content of the initial axioms. We address the question of variation in the number of types separately in Section \ref {sec-representation-extended}. Comparable structures, applied to the assessment of inequality of opportunity, can be found in \cite{Moramarco2026} and \cite{AFGK26}. These papers, however, leave the issue of demographic weights across types unaddressed.

Theorem~\ref{theo-general} provides a fairly general expression of the social welfare function, of the double-expectation form. From a practical perspective, however, this expression is too general, as it assumes a rather arbitrary choice of functions $u$ and $\phi$. In the rest of this section, we strengthen the axiomatic setting.

\begin{theorem} \label{theo-main} Let $q(s) = 1/ |\states|$ for all $s \in \states$, and $|\states| \geq 3$. The social planner's preferences relation $\Wpref$ satisfies A\ref{ax-Pareto}-A\ref{ax-aversion} and A\ref{ax-ICEO2} iff it can be represented by a function $V_{\theta, u, q} \equiv W_{\phi_{\theta}, u, q}$, with $W_{\phi_\theta, u, q}$ as defined in Theorem~\ref{theo-general}, and with the following function $\phi_\theta$:
    \begin{equation} \label{expo-transfo} 
        \phi_\theta(t) = 
        \begin{cases} 
            t\,, & \text{for } \theta = 0\,, \\
            -\exp\left(-\theta t\right)\,, & \text{for } \theta > 0\,.
        \end{cases}
    \end{equation}
Moreover, two functions $V_{\theta, u, q}$ and $V_{\theta', u', q}$ represent the same preferences iff there exist $a > 0$ and $b \in \real$ such that $u' = a u + b$ and $\theta' = \theta / a$. \footnote{$\phi^{-1}_\theta(t) = - (\nicefrac{1}{\theta}) \ln (-t)$ if $\theta \neq 0$ and $\phi^{-1}_\theta(t) = t$ otherwise.} 
\end{theorem}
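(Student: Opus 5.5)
The plan is to build on Theorem~\ref{theo-general}. Since A\ref{ax-ICEO2} implies A\ref{ax-ICEO}, axioms A\ref{ax-Pareto}--A\ref{ax-aversion} together with A\ref{ax-ICEO2} already give a representation $W_{\phi,u,q}$, with $u$ continuous and non-constant and $\phi$ strictly increasing and weakly concave. What remains is to show that A\ref{ax-ICEO2} forces $\phi$ into the family $\phi_\theta$ up to positive affine transformation. The converse direction is a direct check. For $\phi_\theta$, mixing $f$ with $f_\pi$ shifts each $U(\pi_s)$ to $\alpha U(\pi_s)+(1-\alpha)U(\pi)$, that is, by a common constant $c$. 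For $\theta=0$ the shift adds $c$ to $W$. For $\theta>0$ it multiplies $\sum_s q(s)\phi_\theta(\cdot)$ by the positive factor $e^{-\theta c}$. In both cases the ranking of $\alpha f+(1-\alpha)f_\pi$ against $\alpha f'+(1-\alpha)f_{\pi}$ does not depend on $\pi$. Concavity holds because $\theta\ge 0$, and the uniqueness clause follows from the uniqueness part of Theorem~\ref{theo-general} applied to the exponential family.

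For necessity, I translate A\ref{ax-ICEO2} into a statement about utility vectors. Let $I$ be the interval given by the range of $U$ (an interval, since $u$ is continuous and non-constant and mixing is allowed). Define the preference on vectors $v\in I^{\states}$ by $\Phi(v)=\sum_s q(s)\phi(v_s)$. Because $U$ is affine in mixtures, A\ref{ax-ICEO2} says the following. Take $v,w$ in a suitably shrunk interior subinterval, with $\alpha$ and the constants $c,c'$ ranging over attainable values. Then $\Phi(v+c\mathbf 1)\ge\Phi(w+c\mathbf 1)$ implies $\Phi(v+c'\mathbf 1)\ge\Phi(w+c'\mathbf 1)$. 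Hence the ordering on vectors is translation invariant locally, and by chaining overlapping neighbourhoods (using continuity) it is translation invariant on all of $I$. Next I take two comparisons that are indifferent. Fix $s_1,s_2$ and a third type (this is where $|\states|\ge 3$ helps, though two types suffice here), and consider vectors that differ only in two coordinates. Indifference $\phi(a)+\phi(b)=\phi(a')+\phi(b')$ must then imply $\phi(a+c)+\phi(b+c)=\phi(a'+c)+\phi(b'+c)$. This means $\phi(\cdot+c)$ is another additive representation of the same additively separable order. By the uniqueness of additive representations (the same cardinal uniqueness used in Theorem~\ref{theo-general}, from Debreu/Wakker), we get $\phi(t+c)=A(c)\phi(t)+B(c)$ with $A(c)>0$.

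Now I solve this Pexider-type functional equation. Continuity and strict monotonicity of $\phi$ give the standard solutions: $\phi$ is affine, or $\phi(t)=\pm e^{\kappa t}$ up to positive affine transformation. Concavity and strict increase eliminate the convex exponential. This leaves $\phi(t)=t$ or $\phi(t)=-e^{-\theta t}$ with $\theta>0$.

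I expect the main obstacle to be the step from A\ref{ax-ICEO2} to the global equation $\phi(t+c)=A(c)\phi(t)+B(c)$. The mixtures only produce contractions by $\alpha$ followed by shifts, and the range is bounded. So I need care to show that every pair of small shifts is realisable and that the local equations patch together on all of $I$. My first attempt would be to fix $\alpha$ close to $1$ and use the cardinal uniqueness on small subintervals. I would then extend by connectedness, relying on the fact that the affine coefficients agree on overlaps.
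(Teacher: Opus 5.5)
Your proof is correct. Its architecture matches the paper's: you invoke Theorem~\ref{theo-general} (A\ref{ax-ICEO2} implies A\ref{ax-ICEO}), then use affinity of $U$ to turn A\ref{ax-ICEO2} into translation invariance of the induced order on vectors of type-level expected utilities.

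You diverge at the key step. The paper stops there and cites Theorem~3 of \cite{BD82}, which delivers the exponential family for orders of this form that are invariant to common translations on $\real^{|\states|}$. It then uses concavity to discard the convex members. You instead derive $\phi(t+k)=A(k)\phi(t)+B(k)$ from cardinal uniqueness of additive representations and solve that equation yourself. Your uniqueness clause, read off from Theorem~\ref{theo-general}, is also equivalent to the paper's direct computation.

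What each route buys:
\begin{itemize}
\item The citation buys brevity.
\item Your route is self-contained and handles a point the paper glosses over. With $m=\min u$ and $M=\max u$, utility vectors range only over $[m,M]^{\states}$, not over $\real^{|\states|}$, where the paper places its induced relation. Applying \cite{BD82} therefore strictly requires an extension argument, whereas your functional-equation argument works directly on the bounded range.
\end{itemize}

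The obstacle you anticipate is also milder than you fear, and no patching is needed. For $0<k<M-m$, take $c=m$, $c'=M$ and $1-\alpha=k/(M-m)$. Then $\{\alpha u+(1-\alpha)m\mathbf{1}: u\in[m,M]^{\states}\}=[m,M-k]^{\states}$. So A\ref{ax-ICEO2}, applied in both directions by swapping $\pi$ and $\pi'$, gives $\Phi(v)\geq\Phi(w)\Leftrightarrow\Phi(v+k\mathbf{1})\geq\Phi(w+k\mathbf{1})$ for every $v,w$ such that both pairs lie in $[m,M]^{\states}$, in a single step. Negative shifts follow symmetrically with $c=M$, $c'=m$.

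From there:
\begin{itemize}
\item Uniqueness of the additive representation on $[m,M-k]^{\states}$ yields the functional equation for every $k\in[0,M-m)$.
\item The standard argument finishes it: $A$ is continuous and satisfies $A(k+k')=A(k)A(k')$ whenever $k+k'<M-m$, so $A(k)=e^{\kappa k}$. This leaves $\phi$ affine when $\kappa=0$ and exponential otherwise.
\item Concavity then selects $\theta\geq 0$, exactly as in the paper.
\end{itemize}
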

By strengthening the independence axiom from A\ref{ax-ICEO} to A\ref{ax-ICEO2}, the function $\phi$ becomes translation-scale invariant and takes a single-parameter exponential form. The resulting structure is comparable to that identified by \cite{BD82}. However, their function $\phi_\theta$ depends on individual utilities, whereas in our case it depends on the expected utilities of types. In the literature on decision under ambiguity, \cite{St11} arrives at an equivalent result. However, he needs stronger axioms as his framework is not fully discrete. For example, our Axiom A\ref{ax-ICCD} is replaced by the `P2 Sure-Thing Principle' of \citet[][]{Savage1954}. 

\begin{theorem} \label{theo-ln} Let $q(s) = 1/ |\states|$ for all $s \in \states$, and $|\states| \geq 3$. The social planner's preferences relation $\Wpref$ satisfies A\ref{ax-Pareto}-A\ref{ax-aversion} and A\ref{ax-ICEO2}-A\ref{ax-scale} iff it can be represented by the function $V_{\theta, u_\sigma, q}$, with $V_{\theta, u_\sigma, q}$ as defined in Theorem~\ref{theo-main}, and with the following utility function $u_\sigma$. If $\theta = 0$ then, for all $y \in \incomes$: 
    \begin{equation} \label{util-CRRA}
        \quad u_\sigma(y) = 
        \begin{cases} 
            \ln{y}\,, & \text{for } \sigma = 0\,, \\
            y^\sigma \,, & \text{for } \sigma > 0\,.
        \end{cases}
    \end{equation}
If $\theta > 0$, then $u_\sigma(y) = \ln{y}$, for all $y \in \incomes$ and all $\sigma \in \real$. To simplify the notation, we let $V_{\theta, q} = V_{\theta, \ln, q}$.
\end{theorem}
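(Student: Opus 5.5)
Taking Theorem~\ref{theo-main} as given, A\ref{ax-Pareto}--A\ref{ax-aversion} and A\ref{ax-ICEO2} yield a representation $V_{\theta,u,q}$ with $\theta\ge 0$, a continuous non-constant $u$ unique up to positive affine transformations, and $\theta$ rescaled by $1/a$ under $u\mapsto au+b$. Sufficiency is then a direct check. Under $\theta=0$ and $u_\sigma$, scaling all incomes by $\lambda$ gives $U(\pi^\lambda_s)=\lambda^\sigma U(\pi_s)$ for $\sigma>0$ and $U(\pi^\lambda_s)=U(\pi_s)+\ln\lambda$ for $\sigma=0$. In either case $V$ is transformed monotonically, so A\ref{ax-scale} holds. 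For $\theta>0$ and $u=\ln$, we have $\phi_\theta(U+\ln\lambda)=\lambda^{-\theta}\phi_\theta(U)$, a positive rescaling of each term, so rankings are preserved. Monotonicity A\ref{ax-mon} holds because $u_\sigma$ is strictly increasing.

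For necessity, I would proceed in two steps.

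\textbf{Step 1: pin down $u$ up to affine transformations.} Restrict A\ref{ax-scale} to equal-opportunity profiles $f_\pi$. On these profiles $V$ is a monotone transform of $U(\pi)$, so the expected-utility preference over lotteries on $\incomes$ is invariant under scaling. By the uniqueness part of Theorem~\ref{theo-general} (or standard vNM uniqueness), for every $\lambda>0$ there exist $a(\lambda)>0$ and $b(\lambda)$ with $u(\lambda y)=a(\lambda)u(y)+b(\lambda)$. A\ref{ax-mon} makes $u$ strictly increasing, and $u$ is continuous. This is the Pexider-type functional equation that characterizes the CRRA/log family. Its continuous strictly increasing solutions are, up to positive affine transformations, $\ln y$ or $\operatorname{sign}(\sigma)y^\sigma$ with $\sigma\neq 0$.

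I would derive this as follows. Comparing $u(\lambda\mu y)$ computed in two ways gives $a(\lambda\mu)=a(\lambda)a(\mu)$. So either $a\equiv 1$, which yields Cauchy's equation in $\ln y$ and hence $u=c\ln y+d$, or $a(\lambda)=\lambda^\sigma$, which yields $u=cy^\sigma+d$. The statement lists only $\sigma>0$. The case $\sigma<0$ with $-y^\sigma$ should also survive, so I would either note that the paper's parametrisation absorbs it or check whether A\ref{ax-aversion} and concavity exclude it. For $\theta=0$ they do not seem to, so I expect a convention issue rather than a real restriction. A technical point is that $\incomes$ is a finite subset; I would assume it is closed under scaling (a scale-rich domain) so that the equation holds on an interval.

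\textbf{Step 2: the case $\theta>0$, which I expect to be the main obstacle.} Apply A\ref{ax-scale} to non-equal-opportunity profiles. For the full ranking of $V_{\theta,u,q}$ to be scale invariant, uniqueness in Theorem~\ref{theo-main} requires $u(\lambda\cdot)=a u+b$ with $\theta/a=\theta$, so $a(\lambda)\equiv 1$. This forces the Cauchy branch, and hence $u=\ln$ after normalisation. Once $u=\ln$ is fixed, any $\theta>0$ remains admissible, because the affine freedom in $u$ is now used up: $\theta$ is unique only given that normalisation.

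The delicate part is justifying that scale invariance of the profile ranking implies the transformed representation coincides with one in the same class. To do this I would define $\tilde V(f)=V(\lambda f)$, observe that it is of the form $V_{\theta,u(\lambda\cdot),q}$, and invoke the uniqueness clause of Theorem~\ref{theo-main}, which gives $\theta=\theta/a(\lambda)$.
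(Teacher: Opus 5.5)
Your proposal follows essentially the paper's own argument. The paper likewise writes $V_{\theta,u,q}(\lambda f)=V_{\theta,u(\lambda\cdot),q}(f)$ and uses A\ref{ax-scale} with the uniqueness clause of Theorem~\ref{theo-main}. This gives $u(\lambda y)=a(\lambda)u(y)+b(\lambda)$ and $\theta=\theta/a(\lambda)$ in one step rather than your two. It then cites Acz\'el for the functional equation you solve by hand, so that $a\equiv 1$ and $u=\ln$ when $\theta>0$.

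Your caveats are also correct. The paper silently treats $\incomes$ as closed under scaling. Its exclusion of $\sigma<0$ ``because $u$ has to be increasing'' only rules out positive multiples of $y^{\sigma}$. Yet $u(y)=-y^{\sigma}$ with $\sigma<0$ is strictly increasing, satisfies every axiom when $\theta=0$, and is not a positive affine transform of any listed $u_\sigma$. This is a genuine omission in the statement, not a gap in your proof.
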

In this work, we abstract from individual preferences and focus exclusively on the normative judgments of the social planner. As a result, the functions $u$ and $\phi$ are not completely independent, contrary to what the general representation in Theorem~\ref{theo-general} may suggest. Theorem~\ref{theo-ln} shows that the requirement of scale invariance (A\ref{ax-scale}) implies that the utility function $u$ must be logarithmic in the presence of inequality-of-opportunity aversion ($\theta>0$). The confinement to logarithmic utility imposes a limitation for practical applications. Yet, we think that this limitation is not overly restrictive, especially when considering monetary outcomes, where the evaluation of social welfare using a logarithmic utility function is common practice.

\section{Representation on the extended domain}
\label{sec-representation-extended}

So far, the preference relation $\Wpref$ was defined locally over opportunity profiles $\SocietiesFixed$ while assuming that types represent equally sized subgroups of the population. However, this assumption is overly restrictive for empirical analyses.  To lift this restriction, we extend the framework to compare societies $(f,q) \in \SocietiesFixed \times \Delta(\states)$ with potentially different populations, type structures, and arbitrary type frequencies~$q$. 

Let $\labels$ be the (countably infinite) set of potential individuals. Any population is therefore a subset $\pop \subset \labels$. All possible sets of types can be written as $\types = \left\{ \states \mid \exists \pop \subset \labels\,:\, \states \textnormal{ is a partition of } \pop \right\}$. Note that this setup encompasses both the case where types correspond to groups of individuals and the limiting case in which $\states$ uniquely identifies each individual. The set of all possible societies is denoted~by:
\begin{equation}
     \domain = \bigcup_{\states \in \types} \left( \SocietiesFixed \times \Delta(\states) \right)\,.
\end{equation}
The social planner's preference $\Wpref^\star$ is now defined on the extended domain $\domain$, so that $(f,q) \Wpref^\star\, (f',q')$ means that society $(f,q)$ is weakly preferred to society $(f',q')$. We denote by $\sim^\star$ and $\succ^\star$ its symmetric and asymmetric parts. As with the local preference $\Wpref$ defined in Section~\ref{sec-preferences}, we assume that $\Wpref^\star$ is a continuous weak order. 

We now introduce two additional axioms to characterize the social planner's preferences on the extended domain. The eighth axiom requires that, when two societies are based on the same type partition and have a uniform distribution across types, the preference on the extended domain coincides with local preference. This axiom allows locally defined axioms (Section~\ref{sec-preferences}) to be imported into the general framework.
\begin{axiom}[\textsc{Consistency}] \label{ax-cons}
For all $f,f' \in \SocietiesFixed$ and all $\states \in \types$, if $q(s) = 1/|\states|$ for all $s \in \states$, then $(f,q) \Wpref^\star (f',q)\ \Leftrightarrow\ f \Wpref f'$.
\end{axiom}

The ninth axiom asserts that splitting a type into two subgroups that have the same income distribution as the original type leaves social welfare unchanged.
\begin{axiom}[\textsc{Invariance to Type-Split}] \label{ax-split}
For all $(f,q) \in \SocietiesFixed \times \Delta(\states)$ and all $(f',q') \in \mathcal F_{\states'} \times \Delta(\states')$, if the following three conditions hold, then $(f,q) \sim^\star (f',q')$:
\begin{itemize}
    \item[(i)] $\states$ is obtained from $\states'$ by replacing one type $s_{ab} \in \states'$ with two distinct types $s_a, s_b \notin \states'$,
    \item[(ii)] For all types $s \in \states' \setminus \{s_{ab}\}$ we have $q(s) = q'(s)$, and $\pi_s = \pi'_s$\ ,
    \item[(iii)] $\textcolor{bloud}{q(s_{a})} + \textcolor{orange}{q(s_{b})} = \textcolor{violet}{q'(s_{ab})} > 0$, and $\textcolor{dark-green}{\pi_{s_{a}}} = \textcolor{dark-green}{\pi_{s_{b}}} = \textcolor{dark-blue}{\pi'_{s_{ab}}}$\ .
\end{itemize}
\end{axiom}
Consider the following example:
\[
(f, q)\ =\ 
\begin{pNiceMatrix}[first-row, first-col]
      & \textnormal{\euro 1} & \textnormal{\euro 2} \\
      & 0                    & 1                    \\
s_{a} & \textcolor{dark-green}{1} & \textcolor{dark-green}{0} \\
s_{b} & \textcolor{dark-green}{1} & \textcolor{dark-green}{0} \\
\end{pNiceMatrix}\ ,\
\begin{pNiceMatrix}
0.2 \\
\textcolor{bloud}{0.5} \\
\textcolor{orange}{0.3} \\
\end{pNiceMatrix}
\quad\ \sim^\star\quad\
\begin{pNiceMatrix}[first-row, first-col]
       & \textnormal{\euro 1}     & \textnormal{\euro 2}     \\
       & 0                        & 1                        \\
s_{ab} & \textcolor{dark-blue}{1} & \textcolor{dark-blue}{0} \\
\end{pNiceMatrix}\ ,\
\begin{pNiceMatrix}
0.2 \\
\textcolor{violet}{0.8} \\
\end{pNiceMatrix}
= (f', q')\ .
\]

These two axioms suffice to establish that the representation of local preferences over opportunity profiles $f \in \SocietiesFixed$ obtained in Section~\ref{sec-representation-uniform} can also represent preferences in the extended domain over societies $(f,q) \in \domain$. We add these axioms to those used in Theorem~\ref{theo-general} to obtain our next result.\footnote{Note that while the local representation in Theorem~\ref{theo-general} strictly requires at least three essential types to ensure additive separability, Theorem~\ref{theo-extended} applies to all societies in $\domain$, including those with only one or two types. This is a direct consequence of  A\ref{ax-split}: Any society with fewer than three types can be viewed as ordinally equivalent to a society with three or more identical sub-types, thereby inheriting the additive structure without additional axioms. Behind A\ref{ax-split} lies, implicitly, the Dalton principle for populations.}
 
\begin{theorem} \label{theo-extended} Assume that the local preference $\Wpref$ over uniform partitions satisfies A\ref{ax-Pareto}-A\ref{ax-ICEO}. The social planner's preferences relation~$\Wpref^\star$ satisfies A\ref{ax-cons}-A\ref{ax-split} iff, for all $(f,q) \in \domain$, it can be represented by the function $W_{\phi, u}(f,q) = \phi^{-1} \left(  \sum_{s \in \states} q(s) \phi \left( U(\pi_s) \right) \right)$, with $\phi$ and $u$ as characterized in Theorem~\ref{theo-general}.
\end{theorem}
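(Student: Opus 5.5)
Necessity is immediate, so the work is in sufficiency, which I would reduce to the uniform case via type-splitting.

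For necessity, suppose $\Wpref^\star$ is represented by $W_{\phi,u}(f,q)$. With a uniform $q$ on a fixed $\states$, this coincides with $W_{\phi,u,q}$ of Theorem~\ref{theo-general}, which represents $\Wpref$. That gives A\ref{ax-cons}. For A\ref{ax-split}, replacing $s_{ab}$ by $s_a,s_b$ with identical distributions and $q(s_a)+q(s_b)=q'(s_{ab})$ leaves the sum $\sum_s q(s)\phi(U(\pi_s))$ unchanged, since $q(s_a)\phi(U(\pi'_{s_{ab}}))+q(s_b)\phi(U(\pi'_{s_{ab}}))=q'(s_{ab})\phi(U(\pi'_{s_{ab}}))$. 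Hence $W_{\phi,u}$ is unchanged.

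For sufficiency, the plan is to map every society with rational weights to an equivalent uniform society on a refined partition. Take $(f,q)\in\domain$ with $q(s)=k_s/K$ for integers $k_s$. Applying A\ref{ax-split} repeatedly, splitting each type $s$ into $k_s$ identical subtypes drawn from the infinite label set $\labels$, yields a society $(\tilde f,\tilde q)$ on a partition $\tilde\states$ with $|\tilde\states|=K$ and uniform $\tilde q=1/K$, and $(f,q)\sim^\star(\tilde f,\tilde q)$. Types with $q(s)=0$ need separate treatment. Condition (iii) requires $q'(s_{ab})>0$, so I would handle these by splitting a positive-weight type into a part of weight $0$ carrying an arbitrary distribution, or else argue by continuity. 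I would also choose $K\ge 3$, refining further if needed, so that Theorem~\ref{theo-general} applies. By A\ref{ax-cons}, $\Wpref^\star$ restricted to uniform societies on $\tilde\states$ is represented by $W_{\phi_K,u_K,1/K}$. Evaluating this on $\tilde f$ gives exactly $\phi_K^{-1}\bigl(\sum_s q(s)\phi_K(U_K(\pi_s))\bigr)$.

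The main obstacle is comparing societies across different partitions and showing that a single pair $(\phi,u)$ works for all $K$ and all partitions. I would proceed in three steps.

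First, I would note that an equal-opportunity society $(f_\pi,q)$ is $\sim^\star$-equivalent to the one-type society with distribution $\pi$, by merging types via A\ref{ax-split}. So equal-opportunity societies are ranked by a single order on $\Delta(\incomes)$. Within each uniform partition this order is represented by $U_K$. By the uniqueness clause of Theorem~\ref{theo-general}, all the $u_K$ are positive affine transforms of one $u$, so I would normalise them to be equal.

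Second, any two partitions with $K$ and $K'$ types can be refined to a common uniform partition with $KK'$ types. Since A\ref{ax-split} embeds the $K$-representation into the $KK'$-representation, uniqueness again forces $\phi_K$ and $\phi_{KK'}$ to agree up to positive affine maps on the range of $u$. After normalisation I would take a single $\phi$.

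Third, for arbitrary $(f,q),(f',q')$ on different partitions, I would refine both to uniform societies on disjoint label sets of equal size, and relabel via A\ref{ax-anonymity}, imported through A\ref{ax-cons}, onto a common partition. Cross-partition comparisons then reduce to a single uniform $\Wpref$, which the common $W_{\phi,u}$ represents.

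Finally, I would extend from rational to real $q$ using continuity of $\Wpref^\star$ and of $W_{\phi,u}$ in $q$. The most delicate point here is justifying that continuity makes the representation pass to the limit, that is, that $W_{\phi,u}$ ranks real-weighted societies in the same way as the approximating rational ones.
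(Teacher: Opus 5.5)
Your skeleton is the paper's proof. Necessity is by regrouping $q(s_a)\phi(\cdot)+q(s_b)\phi(\cdot)=q'(s_{ab})\phi(\cdot)$. Sufficiency splits a rational $q=(k_s/K)_s$ into $K$ equally weighted copies via A\ref{ax-split}, evaluates the resulting uniform society through A\ref{ax-cons} and Theorem~\ref{theo-general}, regroups, and passes to real $q$ by density and continuity.

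Your Steps 1--3 address something the paper settles in one sentence. The paper ``calibrates'' a single representation $H$ of $\Wpref^\star$ to equal $\phi^{-1}\bigl(\sum_s \overline{q}(s)\phi(U(\pi_s))\bigr)$ on every uniform society. That tacitly assumes one pair $(\phi,u)$ serves all partitions, and that cross-partition rankings of uniform societies already agree with it. Steps 1 and 2 are a sound way to supply this: vNM uniqueness gives a common $u$, since $(f_\pi,q)$ is indifferent to a one-type society, and uniqueness of additive representations on $\states_K$, after embedding it into $\states_{KK'}$, gives a common $\phi$.

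Two justifications need repair, however.

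First, in Step 3, A\ref{ax-anonymity} imported through A\ref{ax-cons} cannot move a society to another partition. Both only relate societies on one fixed $\states$: the permutation $\sigma$ maps $\states$ onto itself, and A\ref{ax-cons} is stated partition by partition. The only axiom linking different partitions is A\ref{ax-split}. It does the job once clause (i) is read as allowing $s_a,s_b$ to be arbitrary fresh types. Splitting a positive-weight type $t$ into fresh $a,b$ and then merging $a,b$ into any type $c$ disjoint from the rest renames $t$ as $c$. Since $\labels$ is infinite, you can move all types of $\tilde\states$ to fresh labels and then onto those of $\tilde\states'$. Step 2's ``common uniform partition'' and Step 1's single one-type society silently rely on the same transport.

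This is not cosmetic. Suppose A\ref{ax-split} allowed only genuine refinements $s_a\cup s_b=s_{ab}$. Then societies on different populations would never be connected. In that case $W_{\phi,u}(f,q)+c(\pop)$, with $c$ an arbitrary population-specific constant, satisfies every hypothesis without being ordinally equivalent to $W_{\phi,u}$.

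Second, your first fix for zero weights is not permitted. A\ref{ax-split}(iii) forces $\pi_{s_a}=\pi_{s_b}=\pi'_{s_{ab}}$, so a split-off zero-weight type carries the original distribution, not an arbitrary one. Take your second option instead. Run the splitting only for strictly positive rational $q$, which are still dense. The paper's construction needs this too, because a type with $k_s=0$ cannot be removed by A\ref{ax-split}. Then let continuity cover the boundary.

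For that continuity step, which you leave open, one clean route is as follows. By the intermediate value theorem, choose $\pi^\ast$ with $\phi(U(\pi^\ast))=\sum_s q(s)\phi(U(\pi_s))$. Then show that $(f,q)$ is $\sim^\star$-indifferent to the one-type society with $\pi^\ast$. A strict preference would persist, by openness of strict contour sets, along positive rational $q_n\to q$ against a one-type society with a slightly better (or worse) distribution. That contradicts convergence of $\sum_s q_n(s)\phi(U(\pi_s))$.
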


Note that this result is based on Theorem~\ref{theo-general}, which is the most general. However, it also applies to the social welfare functions obtained in Theorems~\ref{theo-main} and~\ref{theo-ln}. We denote these social welfare functions in the extended domain by $V_{\theta, u}(f,q)$ and $V_{\theta}(f,q)$, respectively.

\section{Equivalent representations and stochastic dominance}
\label{sec-representation-equivalent}

In this section, we introduce three equivalent representations of the social welfare function $V_{\theta, u}(f,q)$, in order to illustrate from different perspectives how aversion to inequality of opportunity is captured. The alternative representations are also valid for the social welfare function $V_{\theta}(f,q)$, provided that the utility function $u$ is logarithmic. In addition to these alternative representations, we establish a ‘stochastic dominance’ criterion that allows for unambiguous ranking of societies for a wide class of social planner's preferences over inequality of opportunity.

As already established by \cite{St11}, $V_{\theta, u}(f,q)$ is the only function at the intersection of two larger classes, i.e., second-order utility (Theorem~\ref{theo-general}) and variational preferences \cite[][]{MMR06}. Under the variational framework, this social welfare function is equivalent to multiplier preferences proposed by \cite{HS01}. This point is clarified in the following theorem. To establish this representation, we introduce the Kullback-Leibler divergences.
\begin{definition} \label{def-KL}
    Consider two distributions $p, q \in \Delta(\states)$, with $p$ absolutely continuous to~$q$. The set of such distributions $p$ is denoted $\Delta^q(\states)$. A Kullback-Leibler divergence $D_{KL}$ is defined by:~\footnote{By absolute continuity, $q(s)=0$ implies $p(s)=0$. Moreover, in empirical analyses, it is generally the case that $q(s) > 0$ for all $s \in\states$. Because $\lim_{t \rightarrow 0^+} t \ln t = 0$, the contribution of state $s \in \states$ to the divergence $D_{KL}(p || q)$ is zero if $p(s)=0$.}
\[
D_{KL}(p || q) = \sum_{s \in \states} p(s) \ln{\frac{p(s)}{q(s)}}\,.
\]
\end{definition}
Kullback-Leibler divergences are a special case of a more general class called Bregman divergences \cite[see][and Definition~\ref{def-Bregman} below]{MN11}.\footnote{\cite{HKP22} apply them as an index of inequality of opportunity.} An important property of Kullback-Leibler divergences is that $D_{KL}(p || q) \geq 0$, with equality if and only if $p=q$.

\begin{theorem} \label{theo-main2} A second representation of the social welfare function $V_{\theta, u}(f,q)$ is:
    \begin{equation} \label{HS-representation}
        V_{\theta, u}(f,q) =  
        \begin{cases} 
            \sum_{s \in \states} q(s) U(\pi_s)\,, & \text{for } \theta = 0\,, \\
            \MIN_{p \in \Delta^q(\states)} \left[ \sum_{s \in \states} p(s) U(\pi_s) + \frac{1}{\theta} D_{KL}(p || q) \right]\,, & \text{for } \theta > 0\,,
        \end{cases}
    \end{equation}
where $p \in \Delta^q(\states)$. The optimal weights $p^\star_{\theta, u}(f,q;\cdot) = \argmin_{p \in \Delta^q(\states)} V_{\theta, u}(f,q)$ are such that, for all $s \in \states$: 
    \begin{equation} \label{opt-p}
        p^\star_{\theta, u}(f,q;s) =\frac{q(s) e^{-\theta U(\pi_s)}}{\sum_{s \in \states} q(s) \left(e^{- \theta U(\pi_s)} \right)}\,. 
    \end{equation}
\end{theorem}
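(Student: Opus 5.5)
The plan is to verify the variational identity directly via the Gibbs (Donsker--Varadhan) variational formula; no axioms are needed, only the closed form of $V_{\theta,u}$ on the extended domain, namely $V_{\theta,u}(f,q)=\phi_\theta^{-1}\big(\sum_s q(s)\phi_\theta(U(\pi_s))\big)$ from Theorems~\ref{theo-main} and~\ref{theo-extended}. For $\theta=0$ we have $\phi_0(t)=t$, so the formula collapses immediately to $\sum_s q(s)U(\pi_s)$. For $\theta>0$ we use $\phi_\theta(t)=-e^{-\theta t}$ and $\phi_\theta^{-1}(t)=-\frac1\theta\ln(-t)$, which give
\[
V_{\theta,u}(f,q)=-\frac{1}{\theta}\ln\Big(\sum_{s\in\states}q(s)e^{-\theta U(\pi_s)}\Big).
\]
The task is therefore to show that this log-sum-exp quantity equals the minimum in (\ref{HS-representation}) and is attained at (\ref{opt-p}).

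Write $U_s=U(\pi_s)$, let $Z=\sum_s q(s)e^{-\theta U_s}>0$, and define $p^\star(s)=q(s)e^{-\theta U_s}/Z$. Then $p^\star\in\Delta^q(\states)$, since it vanishes wherever $q$ does and sums to one. The key step is an exact decomposition of the objective. For any $p\in\Delta^q(\states)$, restrict sums to $\{s: q(s)>0\}$ and use $\ln q(s)=\ln p^\star(s)+\theta U_s+\ln Z$ on that set. This gives
\[
\sum_s p(s)U_s+\frac1\theta\sum_s p(s)\ln\frac{p(s)}{q(s)}
=\sum_s p(s)U_s+\frac1\theta\sum_s p(s)\Big[\ln\frac{p(s)}{p^\star(s)}-\theta U_s-\ln Z\Big],
\]
and the $U_s$ terms cancel, so the objective equals
\[
-\frac1\theta\ln Z+\frac1\theta D_{KL}(p\,\|\,p^\star).
\]
Here $p$ is absolutely continuous with respect to $p^\star$ because $p^\star$ and $q$ have the same support, and terms with $p(s)=0$ vanish under the convention $0\ln 0=0$ recalled after Definition~\ref{def-KL}.

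From the property stated after Definition~\ref{def-KL} ($D_{KL}\ge0$, with equality iff the two distributions coincide), the minimum over $\Delta^q(\states)$ equals $-\frac1\theta\ln Z=V_{\theta,u}(f,q)$. The minimizer is unique and equals $p^\star$, which is exactly (\ref{opt-p}).

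The main obstacle is only bookkeeping: handling types with $q(s)=0$, the boundary cases $p(s)=0$, and confirming that the minimum is attained rather than merely an infimum. All three are settled by the exact decomposition above. If desired, a Lagrangian first-order condition check ($U_s+\frac1\theta(\ln\frac{p(s)}{q(s)}+1)+\lambda=0$) recovers the same $p^\star$ as a cross-check, and strict convexity of $D_{KL}$ in $p$ confirms it is the global minimum.
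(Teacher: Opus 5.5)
Your argument is correct, and it takes a different route from the paper's. The paper writes a Lagrangian on the support of $q$, with a multiplier $\lambda$ for $\sum_s p(s)=1$ and multipliers $\nu_s$ for $p(s)\geq 0$. It solves the first-order condition to obtain $p(s)=q(s)\exp\bigl(\theta(-U(\pi_s)-\lambda+\nu_s)-1\bigr)$. Positivity of this expression and complementary slackness give $\nu_s=0$, and normalization fixes $\lambda$. The paper then substitutes $p^\star$ back into the objective to recover $-\frac{1}{\theta}\ln\sum_s q(s)e^{-\theta U(\pi_s)}$. You instead verify the candidate through the exact identity $\sum_s p(s)U(\pi_s)+\frac{1}{\theta}D_{KL}(p\,\|\,q)=-\frac{1}{\theta}\ln Z+\frac{1}{\theta}D_{KL}(p\,\|\,p^\star)$, combined with Gibbs' inequality.

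The paper's route derives the exponential-tilting form of the weights rather than presupposing it. Yours delivers attainment, global optimality and uniqueness of the minimizer in one step. By contrast, the paper leaves two things implicit: that the KKT point is a global minimum (which needs convexity of the objective) and that it is the only minimizer. Its first-order condition also differentiates $p\ln p$, which is not differentiable at $p=0$; your decomposition avoids this through the convention $0\ln 0=0$. Your identity is in fact the finite-state Gibbs (Donsker--Varadhan) variational formula. That is precisely the large-deviations duality the paper attributes to Strzalecki's use of Dupuis and Ellis, and contrasts with its own ``simple optimization argument''; in this discrete setting it is just as elementary.
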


While $q(s)$ is the `empirical weight' of type $s$ in the population, $p(s)$ can be interpreted as a `normative weight' applied to type $s$ by the social planner. Equation \eqref{opt-p} furthermore shows that optimal type weights $p^\star_{\theta, u}(f,q;s)$ are negatively related to types' expected utilities $U(\pi_s)$ and that the strength of this negative relationship is governed by the inequality-of-opportunity aversion parameter $\theta$. Hence, while the representation in Theorem~\ref{theo-main} captures inequality-of-opportunity by a concave transformation of types' expected utilities $U(\pi_s)$, the dual representation of Theorem~\ref {theo-main2} captures inequality-of-opportunity aversion through a multiplicative weighting of $U(\pi_s)$ with normative weights $p(s)$. Note that the duality identified in \cite{St11} builds on results from the `theory of large deviations' discussed in \cite{DE97}, whereas we obtain it through a simple optimization argument.

The preferences of the social planner are characterized by a fundamental trade-off between increasing the incomes of all individuals (efficiency) and reducing disparities in income prospects across types (equality of opportunity). The following representations of the social welfare function make this dichotomy completely transparent. We first need the following definition.
\begin{definition} \label{def-Bregman}
    Let $\varphi : \real \rightarrow \real$ be a differentiable and strictly convex function, with derivative $\varphi^{(1)}$. The Bregman divergence with generator $\varphi$, denoted $D_\varphi$, between two real numbers $x$ and $y$, is defined by $D_\varphi(x||y) = \varphi(x) - \varphi(y) -(x-y)\varphi^{(1)}(y)$.
\end{definition}
\noindent A Bregman divergence is a measure of `distance' between two quantities, called a divergence because it does not satisfy the triangular inequality. Like the Kullback-Leibler divergence, a Bregman divergence satisfies $D_\varphi(x||y) \geq 0$ for all $x,y \in \real$, and $D_\varphi(x||y)=0$ if and only if $x=y$. By letting $U = {(U(\pi_s))}_{s \in \states}$ be the list of expected utilities of all types in $f \in \SocietiesFixed$, distributed according to $q = {(q(s))}_{s \in \states}$, we obtain the following result.

\begin{theorem} \label{theo-main3} The social welfare function $V_{\theta, u}(f,q)$ admits two other representations. If $\theta = 0$, then $V_{\theta, u}(f,q) = \expect_q [U]$ and, if $\theta > 0$, then: 
\[
V_{\theta, u}(f,q) \approx \expect\!_q [U] - \frac{\theta}{2} \var_q(U)\quad \text{or}\quad V_{\theta, u}(f,q)  = \expect\!_q[U] - \frac{1}{\theta} D_{K_U}(-\theta||0)\,,
\]
with $K_U(\tau) = \ln \left( \expect_q [e^{\tau U}] \right)$ for any $\tau \in \real$.
\end{theorem}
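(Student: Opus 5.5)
The plan is to work directly from the closed form given by Theorem~\ref{theo-main} (extended by Theorem~\ref{theo-extended}). For $\theta>0$, $\phi_\theta(t)=-e^{-\theta t}$ and $\phi_\theta^{-1}(t)=-(1/\theta)\ln(-t)$, so
\[
V_{\theta,u}(f,q) = -\frac{1}{\theta}\ln\Bigl(\sum_{s\in\states} q(s)e^{-\theta U(\pi_s)}\Bigr) = -\frac{1}{\theta}K_U(-\theta),
\]
where $K_U$ is the cumulant generating function of $U$ under $q$. The case $\theta=0$ is immediate, since $\phi_0$ is the identity and $V_{0,u}(f,q)=\expect_q[U]$. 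Everything for $\theta>0$ is then a statement about $K_U$.

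\textbf{Exact mean--divergence form.} First I would record the basic facts about $K_U$. It is differentiable, with $K_U(0)=\ln 1=0$ and $K_U^{(1)}(0)=\expect_q[U e^{0}]/\expect_q[e^{0}]=\expect_q[U]$. For strict convexity, $K_U^{(2)}(\tau)$ equals the variance of $U$ under the tilted distribution proportional to $q(s)e^{\tau U(\pi_s)}$, which is exactly the $p^\star$ of Theorem~\ref{theo-main2} at $\tau=-\theta$. This variance is positive whenever $U$ is nonconstant on the support of $q$.

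Applying Definition~\ref{def-Bregman} with $x=-\theta$ and $y=0$ gives
\[
D_{K_U}(-\theta\|0)=K_U(-\theta)-0+\theta\,\expect_q[U].
\]
Hence $\expect_q[U]-\frac1\theta D_{K_U}(-\theta\|0)=-\frac1\theta K_U(-\theta)=V_{\theta,u}(f,q)$, and the identity is exact.

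The one real obstacle is that Definition~\ref{def-Bregman} demands strict convexity. If $U$ is constant $q$-a.s. (equal opportunity), $K_U$ is affine. In that case I would note that the Bregman expression is still well defined and equals $0$, and that $V_{\theta,u}=\expect_q[U]$ directly. Alternatively, I would treat this degenerate case separately in the statement's spirit.

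\textbf{Mean--variance approximation.} I would Taylor-expand $K_U$ around $0$ to second order:
\[
K_U(-\theta)=-\theta\,\expect_q[U]+\frac{\theta^2}{2}\var_q(U)+O(\theta^3),
\]
using $K_U^{(2)}(0)=\var_q(U)$. Dividing by $-\theta$ gives $V_{\theta,u}\approx\expect_q[U]-\frac\theta2\var_q(U)$, with an error of order $\theta^2$ involving the third cumulant.

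Equivalently, the approximation is a second-order expansion of the Bregman term: $D_{K_U}(-\theta\|0)\approx\frac{\theta^2}{2}K_U^{(2)}(0)$. This links the two representations and interprets the approximation as valid for small $\theta$, or for small dispersion of $U$. I would state explicitly that "$\approx$" means equality up to $O(\theta^2)$ terms, which is the natural precise reading.
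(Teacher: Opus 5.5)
Your proposal is correct and follows essentially the same route as the paper: write $V_{\theta,u}(f,q)=-\frac{1}{\theta}K_U(-\theta)$, get the exact identity from Definition~\ref{def-Bregman} using $K_U(0)=0$ and $K_U^{(1)}(0)=\expect_q[U]$, and get the mean--variance form by truncating the cumulant expansion at second order. Two of your choices are small gains in rigor over the paper. You use a Taylor expansion with remainder, whereas the paper writes a full power series that need not converge for large $\theta$. You also treat separately the degenerate constant-$U$ case, where $K_U$ is affine rather than strictly convex.
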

Theorem~\ref{theo-main3} proposes two alternative representations. First, it shows that social welfare $V_{\theta, u}(f,q)$ can be approximated by a `mean–variance' criterion. In this representation, the weight assigned to disparities across types, as measured by the variance $\var_q(U)$, is directly proportional to the planner’s degree of inequality-of-opportunity aversion.\footnote{\cite{MMR13} provide a generalization of this approximation in the context of second-order utility (Theorem~\ref{theo-general})} In this representation, the inequality-of-opportunity aversion parameter $\theta$ plays a role analogous to the Arrow–Pratt coefficient of absolute risk aversion in the decision theory literature.

Second, Theorem~\ref{theo-main3} also provides a closed-form representation, which we call the `mean–divergence' criterion. In this representation, disparities in income prospects across types are exactly captured by the Bregman divergence $D_{K_U}$. Given the well-established statistical properties of Bregman divergences \cite[][]{MN11}, this representation makes the efficiency--inequality trade--off explicit.\footnote{A comparable representation in the finance literature is provided in Theorem 1 of \cite{NMBN11}, where the divergence is described as a risk premium.} The closed form of the mean–divergence criterion goes beyond the local mean–variance approximation by accounting for all moments of the distribution of types' expected utilities.

The rest of this section is devoted to stochastic dominance. For any given value of the inequality-of-opportunity aversion parameter $\theta$, two societies can always be ranked; however, the ranking may change as this parameter varies. Therefore, we are interested in the preorder relation which corresponds to the `dominance' of a society over another, for all admissible values of $\theta$. To establish our stochastic dominance result, the following definition is needed.

\begin{definition} \label{def-CM}
    A function $\varphi : [0,\infty) \rightarrow \real$ is said completely monotone if all the derivatives $\varphi^{(n)}$ of $\varphi$ exist, and they satisfy $(-1)^{(n)} \varphi^{(n)}(x) \geq 0$ for all integers $n \geq 0$, with $\varphi^{(0)} = \varphi$. A function  $\varphi : [0,\infty) \rightarrow \real$ such that $\varphi^{(1)}$ is completely monotone is said completely alternating.\footnote{Adding the restriction $\varphi \geq 0$, it is a called a Bernstein function.}
\end{definition}
\noindent The notion of completely alternating functions has been investigated in the context of stochastic dominance for risky decisions \cite[see][]{Wh89}. In this literature, the utility function of an expected utility maximizer is typically assumed to be increasing and concave (positive first derivative and negative second derivative, respectively). The commonly used CARA and CRRA utility functions have these properties and, more generally, are completely alternating (positive third derivative, negative fourth derivative, and so on). For the social welfare function $V_{\theta, u}(f,q)$, the social planner transforms the expected utility $U(\pi_s)$ of each type $s \in \states$ using the function $\phi_\theta$ defined in~(\ref{expo-transfo}), which is a particular completely alternating function. It is natural to ask what would happen if $\phi_\theta$ were replaced by another completely alternating function.

\begin{theorem} \label{theo-stoch_dom} Let $(f,q), (f',q') \in \domain$ be two societies defined respectively on type partitions $\states$ and $\states'$. Assuming, without loss of generality, that the utility function $u$ takes positive values, the following two statements are equivalent: 
\begin{itemize}
    \item[(i)] $\sum_{s \in \states} q(s) \varphi \left( U(\pi_s) \right) \geq \sum_{s \in \states'} q'(s) \varphi \left( U(\pi'_s) \right)$, for all $\varphi : \real \rightarrow \real$ completely alternating\,,
    \item[(ii)] $V_{\theta, u}(f,q) \geq V_{\theta, u}(f',q')$, for all $\theta \geq 0$\,.
\end{itemize}
\end{theorem}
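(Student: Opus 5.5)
The plan is to reduce the equivalence to Bernstein's characterization of completely monotone functions: a function $g$ on $[0,\infty)$ is completely monotone iff $g(x)=\int_{[0,\infty)} e^{-\theta x}\,d\nu(\theta)$ for a nonnegative measure $\nu$. Applying this to $\varphi^{(1)}$ and integrating shows that every completely alternating $\varphi$ has the form
\[
\varphi(x) = c + \nu(\{0\})\,x + \int_{(0,\infty)} \frac{1-e^{-\theta x}}{\theta}\, d\nu(\theta),
\]
with $c\in\real$ and $\nu\ge 0$. Equivalently, up to the constant $c$, $\varphi$ is a nonnegative mixture of the functions $\phi_\theta$ from (\ref{expo-transfo}), each suitably normalized by positive affine maps. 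This is why the positivity assumption on $u$ is needed: it places the values $U(\pi_s)$ in $[0,\infty)$, where $\varphi$ is defined.

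For (i)$\Rightarrow$(ii), I fix $\theta\ge 0$ and note that $\phi_\theta$, or the affinely equivalent $(1-e^{-\theta t})/\theta$, is completely alternating. Taking $\varphi=\phi_\theta$ in (i) gives $\sum_s q(s)\phi_\theta(U(\pi_s))\ge \sum_s q'(s)\phi_\theta(U(\pi'_s))$. Since $\phi_\theta^{-1}$ is strictly increasing, applying it to both sides yields $V_{\theta,u}(f,q)\ge V_{\theta,u}(f',q')$, using the extended-domain definition from Theorem~\ref{theo-extended}. The case $\theta=0$ is the identity.

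For (ii)$\Rightarrow$(i), I reverse the argument. For each $\theta>0$, applying $\phi_\theta$ to (ii) gives
\[
\expect_q\bigl[e^{-\theta U}\bigr]\le \expect_{q'}\bigl[e^{-\theta U'}\bigr],
\]
and hence the inequality $\expect_q[(1-e^{-\theta U})/\theta]\ge \expect_{q'}[(1-e^{-\theta U'})/\theta]$. At $\theta=0$, (ii) gives $\expect_q[U]\ge\expect_{q'}[U']$. For a general completely alternating $\varphi$, I substitute the integral representation and use Fubini/Tonelli to exchange the finite sum over types with the integral over $\theta$. The sums are finite and the integrand $(1-e^{-\theta x})/\theta\le x$ is nonnegative, so the exchange is valid whenever $\varphi$ is finite at the relevant points. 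The constant $c$ contributes equally to both sides because $q$ and $q'$ sum to one. Integrating the pointwise-in-$\theta$ inequalities against $\nu\ge 0$ then gives (i).

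The main obstacle is justifying the Bernstein representation of completely alternating functions in the form above. Specifically, I must handle the atom of $\nu$ at $0$, which produces the linear term and matches the $\theta=0$ case, and confirm that $\int_0^x \varphi^{(1)}$ converges for finite $x$, which holds because $\varphi^{(1)}$ is finite on $[0,\infty)$. I would cite Bernstein's theorem, as in the stochastic dominance literature \cite{Wh89}, rather than reprove it. A minor point is the domain mismatch: the statement writes $\varphi:\real\to\real$, but the definition is on $[0,\infty)$. I would resolve this by the positivity normalization, which only requires values on $[0,\infty)$ and is without loss of generality by the affine invariance in Theorem~\ref{theo-main}.
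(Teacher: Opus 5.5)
Your proposal is correct and follows essentially the same route as the paper. For (i)$\Rightarrow$(ii) you take $\varphi=\phi_\theta$ and invert $\phi_\theta$. For (ii)$\Rightarrow$(i) you write $\varphi^{(1)}$ as a Laplace transform via Bernstein's theorem, integrate to obtain a nonnegative mixture of the kernels $(1-e^{-\theta x})/\theta$ plus a linear term, and exchange the finite sum with the integral. Your explicit handling of the atom $\nu(\{0\})$ and the constant $c=\varphi(0)$ is, if anything, cleaner than the paper's normalization $\varphi(0)=0$ and its ``$\varphi(z)=z$ otherwise'' clause.
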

\noindent Theorem~\ref{theo-stoch_dom} establishes that the stochastic dominance of society $(f,q)$ over society $(f',q')$ according $V_{\theta, u}$, for all possible values of inequality-of-opportunity aversion, actually extends to a much broader class of social welfare functions, which corresponds to second-order utility (Theorem~\ref{theo-general}). In this case, expected utilities must be transformed by a completely alternating function (including CRRA, CARA, or Cobb-Douglas).\footnote{In stochastic dominance theory, preorder (i) is called a ‘completion’ of preorder (ii).} Therefore, our social welfare function $V_{\theta, u}$ covers a wide range of normative judgments on how to incorporate inequality-of-opportunity aversion when comparing societies. Moreover, when adopting logarithmic utility (Theorem~\ref{theo-ln}), Statement (ii) of Theorem~\ref{theo-stoch_dom} depends on a single parameter, allowing for easy implementation in empirical applications.\footnote{For empirical applications, it is possible to use the reparameterization $\rho = \exp(-\theta)$ and the equally-distributed equivalent income (see Section~\ref{sec-properties}), such that Statement (ii) is equivalent to $\xi_{- \ln \rho,u}(f,q) \geq \xi_{- \ln \rho,u}(f',q')$ for all $\theta \in [0,1]$, noting that $\xi_{- \ln \rho,u}$ is continuous and increasing in $\rho$.}

\section{Properties and comparative inequality of opportunity aversion}
\label{sec-properties}

In this section, we summarize the main representations of our social welfare function (defined at the end of Section~\ref{sec-representation-extended}). We also illustrate its main properties and provide a result on comparative inequality-of-opportunity aversion.

We recall that $q(s)$ denotes the population share of type $s \in \states$ in society $(f,q) \in \domain$ and that $\pi_s$ and $U(\pi_s)$ are, respectively, the income distribution and the expected utility (based on utility function $u$) of this type. The parameter $\theta\in [0, \infty]$ captures increasing inequality-of-opportunity aversion of the social planner. Social welfare can be measured as follows: 
\begin{equation} \label{V-second_order}
    V_{\theta, u}(f,q) = 
    \begin{cases} 
            \sum_{s \in \states} q(s) U(\pi_s)\,, & \text{for } \theta = 0\,, \\
            \left( - \nicefrac{1}{\theta} \right) \ln\left(  \sum_{s \in \states} q(s) \left(e^{- \theta U(\pi_s)} \right) \right)\,, & \text{for } \theta > 0\,.
    \end{cases}
\end{equation}
Another way to account for inequality-of-opportunity aversion is to assign normative weights to each type $s \in \states$, rather than using the population share $q(s)$ (Theorem~\ref{theo-main2}). For each $s \in \states$, the normative weight can be written as:
\begin{equation}
    p^\star_{\theta, u}(f,q;s)=\frac{q(s) e^{-\theta U(\pi_s)}}{\sum_{s \in \states} q(s) \left(e^{- \theta U(\pi_s)} \right)}\,.
\end{equation}

Alternatively, we can replace the compact representation in~(\ref{V-second_order}) with a decomposition that clearly outlines the planner's trade-off between efficiency and equity. Using the expectation operator $\mathbb{E}_q$ instead of weighted sums and denoting by $U = {\left( U(\pi_s) \right)}_{s \in \states}$ the list of types' expected utilities, we obtain a decomposition of social welfare into efficiency, i.e., the average of types' expected utilities, and inequality of opportunity, computed as a Bregman divergence (Theorem~\ref{theo-main3}):\footnote{This representation also highlights the analogy between the Bregman divergence and the variance, the latter being a special case of the former. They differ in that the variance uses the square function, whereas the Bregman divergence here uses the logarithm applied to an exponential transformation of utility.}

\begin{equation} \label{V-second_order2}
    V_{\theta, u}(f,q) = 
    \begin{cases} 
            \mathbb{E}_q[U]\,, & \text{for } \theta = 0\,, \\
            \underbrace{\vphantom{\mathbb{E}_q\!\left[e^{-\theta U(\pi_s)}\right]}\mathbb{E}_q[U]}_{\text{Efficiency}}-\underbrace{\nicefrac{1}{\theta}\left\{\ln \mathbb{E}_q\!\left[e^{-\theta U}\right] - \mathbb{E}_q[\ln e^{-\theta U}]\right\}}_{\text{Inequality of opportunity}}\,, & \text{for } \theta > 0\,.
    \end{cases}
\end{equation}

We now present a set of basic properties of $V_{\theta, u}(f,q)$ that further clarify how the planner incorporates concerns about inequality of opportunity.
\begin{property} \label{prop-V_positive}
If $u(y) \geq 0$ for all $y \in \incomes$, then $V_{\theta, u}(f,q) \geq 0$.
\end{property}
\noindent Property~\ref{prop-V_positive} shows that the non-negativity of the social welfare function follows from the simple requirement that $u$ be positive. Because $u$ is defined only up to an increasing affine transformation, this requirement entails no loss of generality.
\begin{property} \label{prop-derivative_EU}
$\frac{\partial V_{\theta, u}}{\partial U(\pi_s)}(f,q) = p^\star_{\theta, u}(f,q;s)$.
\end{property}
\noindent Property~\ref{prop-derivative_EU} establishes that the marginal effect of the expected utility of type $s \in \states$ is equal to $p^\star_{\theta, u}(f,q;s)$, i.e., the normative weight the social planner assigns to that type. If the social planner is inequality-of-opportunity neutral ($\theta = 0$), then $p^\star_{\theta, u}(f,q;s) = q(s)$, meaning that types' expected utilities are weighted according to the true demographic weights.

\begin{property} \label{prop-derivative_u}
$\frac{\partial V_{\theta, u}}{\partial u(y)}(f,q) = \pi^\star_{\theta, u}(f,q;y)$, where $\pi^\star_{\theta, u}(f,q;y) = \sum_{s \in \states} p^\star_{\theta, u}(f,q;s) \pi_s(y)$.
\end{property}
\noindent Property~\ref{prop-derivative_u} identifies the impact of the utility associated with income $y \in \incomes$ on social welfare. In the overall population, $\pi(y) = \sum_{s \in \states} q(s) \pi_s(y)$ represents the proportion of individuals earning income $y$. If the social planner is inequality-of-opportunity neutral ($\theta = 0$), this is exactly the marginal effect of $u(y)$ on social welfare, since $p^\star_{\theta, u}(f,q;s) = q(s)$. Otherwise, if the planner is inequality-of-opportunity averse ($\theta > 0$), the true demographic weights $q(s)$ are replaced by the normative weights $p^\star_{\theta, u}(f,q;s)$, which transforms $\pi(y)$ into the weighted average $\pi^\star_{\theta, u}(f,q;y)$. By extension, $\frac{\partial V_{\theta, u}}{\partial y}(f,q) = \pi^\star_{\theta, u}(y,q) u'(y)$.

\begin{property} \label{prop-extreme}
$V_{0, u}(f,q) = \sum_{s \in \states} q(s) U(\pi_s) = \sum_{y \in \incomes} \pi(y) u(y)$, and\ $V_{\infty, u}(f,q) = \min_{s \in \states} U(\pi_s)$.
\end{property}
\noindent Property~\ref{prop-extreme} clarifies the role of the inequality-of-opportunity aversion parameter through its polar cases. At $\theta=0$, the criterion collapses to standard utilitarianism: social welfare reduces to the expected utility of the aggregate income distribution. Conversely, as $\theta \rightarrow \infty$, the planner adopts a strict Maximin perspective, focusing solely on the worst-off type. Assuming an affine $u$, the function $V_{\infty, u}$ recovers the opportunity-sensitive measure of \cite{vdG93}, $\min_{s \in \states} \mu(\pi_s)$.

We conclude this section with a result comparing social welfare under two social planners with different degrees of inequality-of-opportunity aversion. For a given society $(f,q) \in \domain$, we simplify notation by writing $p^\star_{\theta,u}(f,q;s) = p^\star_{\theta,u}(s)$ in what follows.
\begin{theorem} \label{theo-comparative} Let $\theta, \theta' \geq 0$. The following three statements are equivalent:
\begin{itemize}
    \item[(i)] $\textcolor{dark-blue}{\theta} < \textcolor{dark-green}{\theta'}$,
    \item[(ii)] $V_{\textcolor{dark-blue}{\theta}, u}(f,q) > V_{\textcolor{dark-green}{\theta'}, u}(f,q)$, for all societies $(f,q) \in \domain$ exhibiting inequality of opportunity, i.e., where there exist $s, s' \in \states$ with $q(s), q(s') >0$ and $U(\pi_s) \neq U(\pi_{s'})$,
    \item[(iii)] For any two-type society with $q = (\nicefrac{1}{2},\nicefrac{1}{2})$ and an opportunity profile $f = (\pi_{s},\pi_{s'}) \in \SocietiesFixed$ such that $U(\pi_s) < U(\pi_{s'})$, we have $p^\star_{\textcolor{dark-green}{\theta'},u}(s) > p^\star_{\textcolor{dark-blue}{\theta},u}(s) \geq p^\star_{\textcolor{dark-blue}{\theta},u}(s') > p^\star_{\textcolor{dark-green}{\theta'},u}(s')$.
\end{itemize}
\end{theorem}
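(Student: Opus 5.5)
The plan is to prove the cycle (i)$\Rightarrow$(ii)$\Rightarrow$(iii)$\Rightarrow$(i). The tool throughout is the power-mean/cumulant structure of $V_{\theta,u}$ in (\ref{V-second_order}). Write $U=(U(\pi_s))_{s\in\states}$ and $K_U(\tau)=\ln\expect_q[e^{\tau U}]$ as in Theorem~\ref{theo-main3}. For $\theta>0$ we then have $V_{\theta,u}(f,q)=-K_U(-\theta)/\theta$, and $V_{0,u}=\expect_q[U]=K_U'(0)$.

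\textbf{(i)$\Rightarrow$(ii).} I would show that $g(\theta)=V_{\theta,u}(f,q)$ is strictly decreasing on $[0,\infty)$ whenever $U$ is non-degenerate under $q$. Setting $h(\tau)=K_U(\tau)/\tau$ for $\tau<0$, we get $g(\theta)=h(-\theta)$. Since $K_U(0)=0$, the quotient $h(\tau)$ is the slope of the chord of $K_U$ from $0$ to $\tau$. The cumulant generating function $K_U$ is convex, and strictly convex exactly when $U$ is not $q$-a.s.\ constant: its second derivative is the variance of $U$ under the tilted measure $p^\star$, which has the same support as $q$. Chord slopes of a strictly convex function from a fixed point are strictly increasing in the other endpoint, so $h$ is strictly increasing on $(-\infty,0)$ and $g$ is strictly decreasing on $(0,\infty)$. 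Continuity at $0$ follows from $h(\tau)\to K_U'(0)=\expect_q[U]$. Strictness also holds when comparing $\theta=0$ with $\theta'>0$, by strict Jensen: $\ln\expect_q[e^{-\theta' U}]>-\theta'\expect_q[U]$. Equivalently, one can use Theorem~\ref{theo-main3}, since the Bregman term $D_{K_U}(-\theta\|0)/\theta$ is strictly increasing in $\theta$ by the same convexity. The hypothesis in (ii), two types with positive weight and distinct $U$, is exactly non-degeneracy.

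\textbf{(ii)$\Rightarrow$(iii).} Take the two-type society with $q=(1/2,1/2)$ and $U_s<U_{s'}$. By (\ref{opt-p}),
\[
p^\star_{\theta,u}(s)=\frac{1}{1+e^{-\theta(U_{s'}-U_s)}},
\]
which is at least $1/2$, with equality iff $\theta=0$. This gives $p^\star_{\theta,u}(s)\ge p^\star_{\theta,u}(s')$. The expression is strictly increasing in $\theta$, so everything reduces to showing $\theta<\theta'$. This follows from (ii) by contraposition: if $\theta\ge\theta'$, then the monotonicity already established gives $V_{\theta,u}\le V_{\theta',u}$ on this very society, which exhibits inequality of opportunity, contradicting (ii). The strict inequality for $s'$ follows from $p^\star(s')=1-p^\star(s)$.

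\textbf{(iii)$\Rightarrow$(i).} Such a society exists because $u$ is non-constant, and $p^\star_{\theta,u}(s)$ is strictly increasing in $\theta$ by the formula above. So $p^\star_{\theta',u}(s)>p^\star_{\theta,u}(s)$ forces $\theta'>\theta$.

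The main obstacle is the strict monotonicity in step one, in particular strict convexity of $K_U$ in non-degenerate cases and the boundary case $\theta=0$. I would handle it through the tilted-variance identity $K_U''(\tau)=\var_{p_\tau}(U)>0$.
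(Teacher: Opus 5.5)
Your proof is correct and follows essentially the paper's argument. Strict convexity of the cumulant generating function $K_U$ gives strict monotonicity of $V_{\theta,u}$ in $\theta$; you phrase this through chord slopes through $0$, while the paper uses the sign of $\partial V_{\theta,u}/\partial\theta$, and both handle $\theta=0$ by Jensen. The logistic formula for $p^\star_{\theta,u}(s)$ then settles (iii), and your cycle also spells out the contrapositive step from (ii) back to $\theta<\theta'$, which the paper leaves implicit.
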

\noindent Theorem~\ref{theo-comparative} establishes that the higher the inequality-of-opportunity aversion parameter $\theta$, the lower the social welfare. This reflects a growing concern about disparities in expected utilities between types. The last statement concerns a simple society profile consisting of two types, $s$ and $s'$, represented with equal population shares. Recalling that $p^\star_{\theta,u}(s)$ measures the marginal impact of the expected utility of type $s$ (Property~\ref{prop-derivative_EU}), if $U(\pi_s) < U(\pi_{s'})$, then $p^\star_{\theta,u}(s) \geq p^\star_{\theta,u}(s')$, with equality only if $\theta=0$. If inequality-of-opportunity aversion increases from $\theta$ to $\theta'$, then the gap between these normative weights increases. This result complements Property~\ref{prop-extreme}, which states that, if $\theta$ tends to infinity in such a society with two types, then $p^\star_{\theta,u}(s)$ tends to 1 and $p^\star_{\theta,u}(s')$ tends to 0.

\section{Inequality of opportunity indices}
\label{sec-inequality}

In this section, we introduce our inequality indicators in three distinct steps. First, we define a comprehensive index of overall income inequality. Then, we decompose this global index into two normatively distinct components: an index of 'social risks' capturing within-type dispersion, and an index of inequality of opportunity capturing between-type disparities. 

To formally define these indices, we must first translate our ordinal social welfare evaluation into interpretable monetary terms. To do so, we rely on the standard concept of `equally-distributed equivalent income' (EDEI). Precisely, the EDEI of a society $(f,q) \in \domain$, denoted $\xi_{\theta, u}(f,q)$, is that income which, if given to all individuals in the society, provides the same level of social welfare as society $(f,q)$. By denoting with $\delta_a \in \SocietiesFixed$ an opportunity profile such that the probability of having income $a \in \incomes$ equals one for all types $s \in \states$, the EDEI is implicitly obtained by $V_{\theta, u}(\delta_{\xi_{\theta, u}(f,q)}, q) = V_{\theta, u}(f,q)$. As $V_{\theta, u}(\delta_{\xi_{\theta, u}(f,q)},q) = U(\delta_{\xi_{\theta, u}(f,q)}) = u(\xi_{\theta, u}(f,q))$, we deduce that $\xi_{\theta, u}(f,q) = u^{-1} \circ V_{\theta, u}(f,q)$. The uniqueness of $\xi_{\theta, u}(f,q)$ is guaranteed by definition. 

The index of overall income inequality is the percentage of income lost due to the unequal distribution of income in society. In this section, we denote by $\pi = \expect_q [f] \in \Delta(\incomes)$ the overall income distribution of a society $(f,q) \in \domain$.
\begin{definition} \label{def-IOpI}
The index of overall inequality is $I_{\theta, u}(f,q) = 1 - \xi_{\theta, u}(f,q)/\mu(\pi)$,  with $I_{\theta, u} \in [0, 1]$.
\end{definition}
\noindent This index aggregates two sources of inequality: inequality among individuals of the same type (within-type inequality), and inequality of opportunity, i.e., differences in expected utility between types. In the following, we establish two sub-indices capturing these sources of inequality.

First, we focus on within-type inequality by neutralizing inequality of opportunity, or equivalently, by assuming the social planner has zero inequality-of-opportunity aversion. In that case, only inequality in the overall distribution of income $\pi \in \Delta(\incomes)$ is assessed: The social planner considers that inequality in society $(f,q) \in \domain$ is equivalent to that in the counterfactual society where distribution $\pi$ is given to all types. After simplification, the EDEI of this counterfactual society $(f_\pi,q) \in \domain$ is:
\begin{equation} \label{EDEI-pi}
    \xi_{\theta, u}(f_\pi,q) \equiv \xi_{u}(\pi) = u^{-1} \left(  \sum_{y \in \incomes} \pi(y) u(y) \right)\,.
\end{equation}
This quantity can be called Atkinson's EDEI, as it also corresponds to the EDEI of the overall income distribution $\pi \in \Delta(\incomes)$. 

Although this second source of inequality is not related to inequality of opportunity, it is nevertheless difficult to describe it as `fair inequality'. Even if such inequalities can partly be attributed to individual effort, it is possible that certain unobservable circumstances may influence individual outcomes. For this reason, we consider that inequality in the counterfactual society $(f_\pi,q)$ is the result of `social risks' combining both, on the one hand, `option luck' and effort (fair) and, on the other, `brute luck' and unobserved circumstances (unfair).\footnote{The distinction between option luck and brute luck has been proposed by \citep[][]{Dw81a,Dw81b}. Option luck refers to the rewards for decisions that turn out to be opportune, as well as the losses associated with poor choices. Conversely, outcomes that are randomly assigned by nature result from what Dworkin calls brute luck.} We therefore call the related inequality measure, below, as the `index of social risks'. Given that not all social risks result solely from effort, the social planner may legitimately be averse to them, with a concave utility function $u$.
\begin{definition} \label{def-intra_index}
    The index of social risks is $I^{R}_{u}(f) = 1 - \xi_{u}(\pi)/\mu(\pi)$, with $I^{R}_{u} \in [0,1]$.
\end{definition}
The index of social risks $I^R_u$ corresponds exactly to the well-known Atkinson's income inequality index \cite[][]{At70}, assuming that $u$ is increasing and concave. 

Second, inequality of opportunity is defined as the relative loss of social welfare between the actual society $(f,q)$, and the counterfactual situation where all types have the same income distribution. Equivalently, it is the relative difference between $\xi_{u}(\pi)$ and $\xi_{\theta, u}(f, q)$. One obtains the following inequality-of-opportunity index.
\begin{definition} \label{def-IOp_index}
The inequality of opportunity index is $I^{O}_{\theta, u}(f, q) = 1 - \xi_{\theta, u}(f, q)/\xi_{u}(\pi)$, with $I^{O}_{\theta, u} \in [0,1]$.
\end{definition}
The domain of $I^{O}_{\theta, u}$ results from Jensen's inequality, with $\xi_{\theta, u}(f, q) \leq \xi_{u}(\pi) \leq \mu(\pi)$ if $u$ is increasing and weakly concave. We also deduce the following property.

\begin{property} \label{prop-indices}
For all $(f, q) \in \domain$, we have $\left( 1 - I_{\theta, u}(f, q) \right) = \left( 1 - I^{R}_{u}(f) \right) \times \left( 1 - I^{O}_{\theta, u}(f, q) \right)$. Moreover, with $u$ increasing and weakly concave, $0 \leq I^{R}_{u}(f) \leq I_{\theta, u}(f, q) \leq 1$ and $0 \leq I^{O}_{\theta, u}(f, q) \leq I_{\theta, u}(f, q) \leq 1$.
\end{property}
\noindent Property~\ref{prop-indices} (stated without proof) summarizes the previous discussion, establishing bounds on the respective indices. Note that overall inequality is always higher than either of the two sub-indices. Moreover, it establishes that global inequality can be decomposed into inequality related to social risks and inequality of opportunity. 

\begin{theorem} \label{theo-comparative-ineq} Let $\theta, \theta' \geq 0$. The following three statements are equivalent:
\begin{itemize}
    \item[(i)] $\textcolor{dark-blue}{\theta} < \textcolor{dark-green}{\theta'}$,
    \item[(ii)] $I^{O}_{\textcolor{dark-blue}{\theta}, u}(f, q) < I^{O}_{\textcolor{dark-green}{\theta'}, u}(f, q)$, for all societies $(f,q) \in \domain$ exhibiting inequality of opportunity,
    \item[(iii)] $I_{\textcolor{dark-blue}{\theta}, u}(f, q) < I_{\textcolor{dark-green}{\theta'}, u}(f, q)$, for all societies $(f,q) \in \domain$ exhibiting inequality of opportunity.
\end{itemize}
\end{theorem}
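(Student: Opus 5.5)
The plan is to reduce everything to Theorem~\ref{theo-comparative}, using the fact that both indices depend on $\theta$ only through the EDEI $\xi_{\theta,u}(f,q) = u^{-1}\circ V_{\theta,u}(f,q)$. The normalizing quantities do not depend on $\theta$: $\xi_u(\pi)$ in Definition~\ref{def-IOp_index} and $\mu(\pi)$ in Definition~\ref{def-IOpI} are functions of the overall distribution $\pi=\expect_q[f]$ alone. Since $u$ is continuous and non-constant (Theorem~\ref{theo-general}) and strictly increasing under A\ref{ax-mon} (Monotonicity), $u^{-1}$ is strictly increasing on the range of $u$. Hence, for fixed $(f,q)$,
\[
I^{O}_{\theta,u}(f,q) < I^{O}_{\theta',u}(f,q) \iff \xi_{\theta,u}(f,q) > \xi_{\theta',u}(f,q) \iff V_{\theta,u}(f,q) > V_{\theta',u}(f,q),
\]
and the same chain holds for $I_{\theta,u}$, because $\mu(\pi)>0$ (incomes lie in $(0,\infty)$) and $\xi_u(\pi)>0$. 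The first step is therefore to record that the maps $\theta \mapsto I^O_{\theta,u}(f,q)$ and $\theta\mapsto I_{\theta,u}(f,q)$ are strictly decreasing transforms of $\theta\mapsto V_{\theta,u}(f,q)$.

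With this in hand, (ii) and (iii) are each equivalent to statement (ii) of Theorem~\ref{theo-comparative}, which asserts $V_{\theta,u}(f,q)>V_{\theta',u}(f,q)$ for all societies exhibiting inequality of opportunity. The notion ``exhibiting inequality of opportunity'' is the same in both theorems, so the quantifier over societies carries over verbatim. Theorem~\ref{theo-comparative} then gives (i)$\Leftrightarrow$(ii)$\Leftrightarrow$(iii).

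The only delicate point is that the range of $u$ must contain $V_{\theta,u}(f,q)$, so that $\xi_{\theta,u}$ is well defined. I plan to check this using the representation \eqref{V-second_order}. The value $V_{\theta,u}(f,q)$ is a quasi-arithmetic mean of the values $U(\pi_s)$, so it lies between $\min_s U(\pi_s)$ and $\max_s U(\pi_s)$. Each $U(\pi_s)$ lies between $\min u$ and $\max u$ on the finite support of $\pi_s$. Since $u$ is continuous and incomes form an interval-like domain, the intermediate value theorem puts $V_{\theta,u}(f,q)$ in the range of $u$; the paper already asserts this uniqueness when defining the EDEI.

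If one prefers not to lean on the full equivalence in Theorem~\ref{theo-comparative}, the direct route is as follows. For (i)$\Rightarrow$(ii),(iii), show that $\theta\mapsto V_{\theta,u}$ is strictly decreasing whenever the $U(\pi_s)$ are not all equal on the support of $q$. This follows from the strict generalized-mean inequality: $\phi_{\theta'}\circ\phi_\theta^{-1}$ is strictly concave for $\theta<\theta'$, so Jensen's inequality applies. For the converse, argue by contraposition. If $\theta\ge\theta'$, then either $\theta=\theta'$, which gives equality, or the monotonicity just established reverses the strict inequality on any society exhibiting inequality of opportunity. Such a society exists, for instance two types with distinct Dirac incomes, so the strict inequalities in (ii) and (iii) fail. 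I expect no real obstacle; the main care is ensuring $u$ is strictly increasing, so that $u^{-1}$ preserves strict inequalities.
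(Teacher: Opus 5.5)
Your proposal is correct and matches the paper's argument: the paper also treats this result as a direct corollary of Theorem~\ref{theo-comparative}, noting that by construction $J_{\theta,u}(f,q) < J_{\theta',u}(f,q) \Leftrightarrow V_{\theta,u}(f,q) > V_{\theta',u}(f,q)$ for $J = I, I^{O}$, because the normalizers $\mu(\pi)$ and $\xi_u(\pi)$ do not depend on $\theta$ and $u^{-1}$ is increasing. Your added details fill in points the paper leaves implicit: strict monotonicity of $u$, an explicit society exhibiting inequality of opportunity for the converse, and the well-definedness of $\xi_{\theta,u}$. On the last point, note that it strictly requires $u$ to be defined on an interval of incomes rather than on a finite $\incomes$, which the paper also takes for granted.
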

\noindent Theorem~\ref{theo-comparative-ineq} (also stated without proof) goes back to the degree of inequality-of-opportunity aversion. It is a direct consequence of Theorem~\ref{theo-comparative} because, by construction, $J_{\theta, u}(f, q) < J_{\theta', u}(f, q) \Leftrightarrow V_{\theta, u}(f, q) > V_{\theta', u}(f, q)$, with $J = I, I^{O}$. It confirms that, as expected, indices $I_{\theta, u}$ and $I^{O}_{\theta, u}$ increase with the inequality-of-opportunity aversion parameter $\theta$. The index of social risks $I^{R}_{u}$ is, by definition, independent of $\theta$.


\section{Conclusion}
\label{sec-conclusion}

The primary contribution of this paper is theoretical: we have constructed a rigorous, axiomatic bridge between decision theory and the measurement of unequal opportunities. By doing so, we derived a set of normative tools that resolve the complex trade-off between efficiency and equality of opportunity. However, these theoretical results are not meant to stand alone. The immediate next step for this research agenda is empirical application. Because our framework yields tractable, additive decompositions of inequality, it provides an off-the-shelf toolkit ready to be applied to real data.

\clearpage

\bibliographystyle{ecca}
\bibliography{Biblio_202602}

@string{ AER   = {American Economic Review} }

@string{ EK    = {Econometrica} }

@string{ EL    = {Economics Letters} }

@string{ IER   = {International Economic Review} }

@string{ JEL   = {Journal of Economic Literature} }

@string{ JET   = {Journal of Economic Theory} }

@string{ JME   = {Journal of Mathematical Economics} }

@string{ JMP   = {Journal of Mathematical Psychology} }

@string{ JPoE  = {Journal of Political Economy} }

@string{ MS    = {Management Science} }

@string{ RES   = {Review of Economic Studies} }

@string{ TE    = {Theoretical Economics} }

@book{A66,
  author    = {Acz\'el, J.},
  title     = {Lectures on Functional Equations and Their Applications},
  publisher = {Academic Press},
  address   = {New York/London},
  year      = {1966}
}

@book{DE97,
  author    = {Dupuis, P. and Ellis, R.S.},
  title     = {A Weak Convergence Approach to the Theory of Large Deviations},
  publisher = {Wiley Series in Probability and Statistics},
  address   = {New York},
  year      = {1997}
}

@book{Morgenstern1944,
  title     = {Theory of Games and Economic Behavior},
  author    = {von Neumann, John and Morgenstern, Oskar},
  year      = {1944},
  publisher = {Princeton University Press},
  address   = {Princeton}
}

@book{Savage1954,
  author    = {Savage, Leonard J.},
  title     = {The Foundations of Statistics},
  year      = {1954},
  publisher = {John Wiley \& Sons},
  address   = {New York}
}

@book{SS07,
  author    = {Shaked, M. and Shanthikumar, J.G.},
  title     = {Stochastic Orders},
  publisher = {Springer Series in Statistics},
  address   = {New York},
  year      = {2007}
}

@incollection{De60,
  author    = {Debreu, G.},
  title     = {Topological Methods in Cardinal Utility Theory},
  booktitle = {Mathematical Methods in Social Sciences},
  editor    = {K.J Arrow and S. Karlin and P. Suppes},
  publisher = {Stanford University Press},
  address   = {Stanford},
  year      = {1960},
  pages     = {16-26}
}

@inproceedings{NMBN11,
  author    = {Nock, R. and Magdalou, B. and Briys, E. and Nielsen, F.},
  title     = {On Tracking Portfolios with Certainty Equivalents on a Generalization of {M}arkowitz Model: The Fool, the Wise and the Adaptative},
  booktitle = {Proceedings of the 28th International Conference on Machine Learning (ICML)},
  address   = {New York},
  year      = {2011},
  publisher = {ICML},
  pages     = {73-80}
}

@incollection{Wh89,
  author    = {Withmore, G.A.},
  title     = {Stochastic Dominance for the Class of Completely Monotonic Utility Functions},
  booktitle = {Studies in the Economics of Uncertainty},
  editor    = {Fomby, Thomas B. and Seo, Tae Kun},
  publisher = {Springer New York},
  address   = {New York, NY},
  year      = {1989},
  pages     = {77-88}
}

@article{BP23,
  author    = {Berg, K. and Piacquadio, P.G.},
  title     = {Fairness and Paretian Social Welfare Functions},
  year      = {forthcoming},
  journal      = {Journal of Political Economy: Microeconomics},
}

@unpublished{Carroll25,
  author    = {Carroll, G.},
  title     = {Is Equal Opportunity Different from Welfarism?},
  year      = {2025},
  note      = {{W}orking Paper, Yale Department of Economics}
}

@unpublished{GPS09,
  author    = {Grant, S. and Polak, B. and Strzalecki, T.},
  title     = {Second-Order Expected Utility},
  year      = {2009},
  note      = {{Working Paper, SSRN}}
}

@unpublished{vdG93,
  author    = {van de {G}aer, D.},
  title     = {Equality of Opportunity and Investment in Human Capital},
  year      = {1993},
  note      = {{Ph.D. dissertation. KULeuven, Leuven}}
}

@article{AFGK26,
  author  = {Andreoli, F. and Faure, M. and Gravel, N. and Kundu, T.},
  title   = {Evaluating Allocations of Opportunities},
  journal = IER,
  year    = {2026},
  volume  = {67},
  number ={1},
  pages   = {365-397}
}

@article{Adler25,
  author  = {Adler, M.D. and Bossert, W. and Cato, S. and Kamaga, K.},
  title   = {Ex Post Approaches to Prioritarianism and Sufficientarianism},
  journal = TE,
  year    = {2025},
  volume  = {20},
  pages   = {1367-1410}
}

@article{AA63,
  author  = {Anscombe, Francis J. and Aumann, Robert J.},
  title   = {A Definition of Subjective Probability},
  journal = {The Annals of Mathematical Statistics},
  volume  = {34},
  number  = {1},
  pages   = {199--205},
  year    = {1963}
}

@article{At70,
  author  = {Atkinson, A.B.},
  title   = {On the Measurement of Inequality},
  journal = JET,
  year    = {1970},
  volume  = {2},
  pages   = {244-263}
}

@article{BD82,
  author  = {Blackorby, C. and Donaldson, D.},
  title   = {Ratio-Scale and Translation-Scale Full Interpersonal Comparability Without Domain Restrictions: Admissible Social-Evaluation Functions},
  journal = IER,
  year    = {1982},
  volume  = {23},
  number = {2},
  pages   = {249-268}
}

@article{BG87,
  author  = {Brockett, P.L. and Golden, L.L.},
  title   = {A Class of Utility Functions Containing All the Common Utility Functions},
  journal = MS,
  year    = {1987},
  volume  = {33},
  number ={8},
  pages   = {955-964}
}

@article{Dw81a,
  author  = {Dworkin, S.},
  title   = {{What is Inequality? Part 1: Equality of Welfare}},
  journal = {Philosophy and Public Affairs},
  year    = {1981},
  volume  = {10},
    number ={3},
  pages   = {185-246}
}

@article{Dw81b,
  author  = {Dworkin, S.},
  title   = {{What is Inequality? Part 1: Equality of Welfare}},
  journal = {Philosophy and Public Affairs},
  year    = {1981},
  volume  = {10},
      number ={4},

  pages   = {283-345}
}

@article{EG09,
  author  = {Ergin, H. and Gul, F.},
  title   = {A Theory of Subjective Compound Lotteries},
  journal = JET,
  year    = {2009},
  volume  = {144},
  number ={3},
  pages   = {899-929}
}

@article{FP13,
  author  = {Fleurbaey, M. and Peragine, V.},
  title   = {Ex Ante versus Ex Post Equality of Opportunity},
  journal = {Economica},
  year    = {2013},
  volume  = {80},
    number ={317},
  pages   = {118-130}
}

@article{Fleurbaey10,
  author  = {Fleurbaey, M.},
  title   = {Assessing Risky Social Situations},
  journal = JPoE,
  year    = {2010},
  volume  = {118},
  number = {4},
  pages   = {649-680}
}

@article{GKPS10,
  author  = {Grant, S. and Kajii, A. and Polak, B. and Safra, Z.},
  title   = {{Generalized Utilitarianism and Harsanyi Impartial Observer Theorem}},
  journal = EK,
  year    = {2010},
  volume  = {78},
    number ={6},
  pages   = {1939-1971}
}

@article{GKPS12,
  author  = {Grant, S. and Kajii, A. and Polak, B. and Safra, Z.},
  title   = {Equally-Distributed Equivalent Utility, Ex Post
Egalitarianism and Utilitarianism},
  journal = JET,
  year    = {2012},
  volume  = {147},
  number ={4},
  pages   = {1545-1571}
}

@article{GS89,
  author  = {Gilboa, I. and Schmeidler, D.},
  title   = {Maxmin Expected Utility with Non-Unique Prior},
  journal = JME,
  year    = {1989},
  volume  = {18},
    number ={2},
  pages   = {141-153}
}

@article{GM02,
  author  = {Ghirardato, P. and Marinacci, M.},
  title   = {Ambiguity Made Precise: A Comparable Foundation},
  journal = JET,
  year    = {2002},
  volume  = {102},
    number ={2},
  pages   = {251-289}
}

@article{HKP22,
  author  = {Hufe, P. and Kanbur, R. and Peichl, A.},
  title   = {Measuring Unfair Inequality: Reconciling Equality of Opportunity and Freedom from Poverty},
  journal = RES,
  year    = {2022},
  volume  = {89},
  number ={6},
  pages   = {3345-3380}
}

@article{HS01,
  author  = {Hansen, L.P. and Sargent, T.},
  title   = {Robust Control and Model Uncertainty},
  journal = AER,
  year    = {2001},
  volume  = {91},
  number ={2},
  pages   = {60-66}
}

@article{KMM05,
  author  = {Klibanoff, P. and Marinacci, M. and Mukerji, S.},
  title   = {A Smooth Model of Decision Making Under Ambiguity},
  journal = EK,
  year    = {2005},
  volume  = {73},
  number ={6},
  pages   = {1849-1892}
}

@article{KP78,
  author  = {Kreps, D.M. and Porteus, E.L.},
  title   = {Temporal Resolution of Uncertainty and Dynamic Choice Theory},
  journal = EK,
  year    = {1978},
  volume  = {46},
  number ={1},
  pages   = {185-200}
}

@article{MMR06,
  author  = {Maccheroni, F. and Marinacci, M. and Rustichini, A.},
  title   = {Ambiguity Aversion, Robustness, and the Variational Representation of Preferences},
  journal = EK,
  year    = {2006},
  volume  = {74},
  number ={6},
  pages   = {1447-1498}
}

@article{MMR13,
  author  = {Maccheroni, F. and Marinacci, M. and Ruffino, D.},
  title   = {Alpha as Ambiguity: Robust Mean-Variance Portfolio Analysis},
  journal = EK,
  year    = {2013},
  volume  = {81},
  number = {3},
  pages   = {1075-113}
}

@article{MN11,
  author  = {Magdalou, B. and Nock, R.},
  title   = {Income Distributions and Decomposable Divergence Measures},
  journal = JET,
  year    = {2011},
  volume  = {146},
  number ={6},
  pages   = {2440-2454}
}

@article{Moramarco2026,
  author  = {Fleurbaey, Marc and Moramarco, Domenico and Peragine, Vito},
  title   = {Measuring inequality and welfare when some inequalities matter more than others},
  journal = {SERIES Working Papers},
  year    = {2024},
  volume  = {3}
}

@article{RT16,
  author  = {Roemer, J.E. and Trannoy, A.},
  title   = {Equality of Opportunity: Theory and Measurement},
  journal = JEL,
  year    = {2016},
  volume  = {54},
  number ={4},
  pages   = {1288-1332}
}

@article{St11,
  author  = {Strzalecki, T.},
  title   = {Axiomatic Foundations of Multiplier Preferences},
  journal = EK,
  year    = {2011},
  volume  = {79},
  number ={1},
  pages   = {47-73}
}

@article{Wa88,
  author  = {Wakker, P.},
  title   = {The Algebraic versus the Topological Approach to Additive Representations},
  journal = JMP,
  year    = {1988},
  volume  = {32},
  number = {4},
  pages   = {421-435}
}

\newpage
\appendix
\section{Proofs}
\label{app: proofs}


	\pagenumbering{arabic}
	\doublespacing
	\renewcommand\thefigure{S.\arabic{figure}}
	\renewcommand\thetable{S.\arabic{table}}
	\setcounter{section}{0}
	\setcounter{figure}{0}
	\setcounter{table}{0}


\noindent \textbf{Proof of Theorem~\ref{theo-general}}. Assume that the axioms stated in the theorem are satisfied. $\Delta(\incomes)$ is a probability simplex over the finite set of incomes $\incomes$. As a convex subset of $\real^{|\incomes|}$, it is a connected and separable space. Because the set of opportunity profiles is  $\SocietiesFixed = {\Delta(\incomes)}^{|\states|}$, it is also connected and separable. Then, we say that a type $s \in \states$ is essential for the relation $\Wpref$ if there exists $f \in \SocietiesFixed$ and two conditional distributions $\pi_s, \pi'_s \in \Delta(\incomes)$ such that $(f_{-s},\pi_s) \succ (f_{-s},\pi'_s)$. From A\ref{ax-Pareto}, all the types in $s \in \states$ are essential. Because it is assumed that $|\states| \geq 3$, at least three types are essential. $\Wpref$ is a continuous weak order. By adding A\ref{ax-ICCD}, the prerequisites of Theorem~4.1 in \cite{Wa88} are satisfied, hence there exist a list of continuous functions $H_s : \Delta(\incomes) \rightarrow \real$ for all $s \in \states$, defined up to an increasing affine transformation,\footnote{The uniqueness result for additive representations on connected spaces applies here \cite[see][Theorem~3]{De60}.} such that $\Wpref$ can be represented by the function $W(f) = \sum_{s \in \states} H_s(\pi_s)$.

Now, we prove that $H_s(\pi_s) = q(s)H(\pi_s)$, under A\ref{ax-anonymity}. Consider any $s, s' \in \states$ and any $f \in \SocietiesFixed$. Let $f'$ be the profile obtained by permuting the conditional lotteries of types $s$ and $s'$, while leaving all other types unchanged. From A\ref{ax-anonymity}, we have $f \sim f'$. Applying the additive representation, we deduce that $H_{s}(\pi_s) + H_{s'}(\pi_{s'}) = H_{s}(\pi_{s'}) + H_{s'}(\pi_s)$. Since this must hold for any arbitrary conditional distributions, it follows that $H_{s} = H_{s'} = \tilde{H}$, for all $s, s' \in \states$, so that $W(f) = \sum_{s \in \states} \tilde{H}(\pi_s)$. By assumption, $q(s) = 1/ |\states|$ for all $s \in \states$. Since $W$ is unique up to a positive affine transformation, we can multiply the entire evaluation by the positive constant $1/|\states|$ without altering the represented preferences. By defining $H(\pi) = |\states| \cdot \tilde{H}(\pi)$, we deduce that $H_s(\pi_s) = q(s)H(\pi_s)$.

Then, we establish that $H(\pi_s) = \phi \left(U(\pi_s)\right)$, with $U(\pi_s)$ the expected utility of $\pi_s$, and $\phi$ an increasing function. For two distributions $\pi, \pi' \in \Delta(\incomes)$, we define $\pi \Wpref^\star \pi' \Leftrightarrow f_{\pi} \Wpref f_{\pi'}$. By construction, the subspace of egalitarian opportunity profiles is isomorphic to the space of distributions $\Delta(\incomes)$, with respect to the probability mixture operation. Since the relation $\Wpref$, when restricted to egalitarian profiles, satisfies the axioms of weak order, continuity and independence (A\ref{ax-ICEO}), it follows that the induced relation $\Wpref^\star$ on $\Delta(\incomes)$ satisfies the von Neumann-Morgenstern (vNM) axioms. By the vNM expected utility theorem, there exists an affine function $U : \Delta(\incomes) \rightarrow \real$ that represents~$\Wpref^\star$. Therefore, there exists a (continuous) utility function over incomes $u : \incomes \rightarrow \real$ such that the evaluation of a distribution $\pi \in \Delta(\incomes)$ is given by $U(\pi) = \sum_{y \in \incomes} \pi(y) u(y)$. Thus, $U(\pi) \geq U(\pi') \Leftrightarrow \pi \Wpref^\star \pi' \Leftrightarrow f_{\pi} \Wpref f_{\pi'}$. Moreover, we know from the previous representation that, for any $f \in \SocietiesFixed$, we have $W(f) = \sum_{s \in \states} q(s)H(\pi_s)$. We deduce that $W(f_\pi) = H(\pi)$. It follows that $U(\pi)$ and $H(\pi)$ represent $\Wpref^\star$, hence there exists an increasing function $\phi$ such that $H(\pi) = \phi \left(U(\pi)\right)$. From A\ref{ax-Pareto}, we know that $u$ is non-constant and $\phi$ is strictly increasing. For the sake of consistency with the rest of the paper, we prefer the ordinally equivalent representation in~(\ref{EDEU-general-representation}), rather than $W(f) = \sum_{s \in \states} q(s)\phi \left(U(\pi_s)\right)$. 

The next step is to show that $\phi$ is weakly concave. Consider profiles $f, f' \in \SocietiesFixed$ as define in A\ref{ax-aversion}. By assuming A\ref{ax-aversion} we have, after simplification (by multiplying by $|\states|$ and dividing by $2$):
\[
f \Wpref f'\ \Longleftrightarrow\ W(f) \geq W(f')\ \Longleftrightarrow\ \left( \nicefrac{1}{2} \right) \phi \left(U(\pi_s)\right) + \left( \nicefrac{1}{2} \right) \phi \left(U(\pi_{s'})\right) \geq \left( \nicefrac{1}{2} \right) \phi \left(U(\pi'_s)\right) + \left( \nicefrac{1}{2} \right) \phi \left(U(\pi'_{s'})\right)\,.
\]
Let $\alpha = 0$. By definition, $\pi_s = \pi_{s'} = \pi_{s,s'} = \left( \nicefrac{1}{2} \right) (\pi'_s + \pi'_{s'})$. Because $U$ is affine, $U(\pi_s) = U(\pi_{s'}) = \left( \nicefrac{1}{2} \right) U(\pi'_s) + \left( \nicefrac{1}{2} \right) U(\pi'_{s'})$. After simplification, one obtains:
\[
f \Wpref f'\ \Longleftrightarrow\ \phi \left(\left( \nicefrac{1}{2} \right) (U(\pi'_s) + U(\pi'_{s'})) \right) \geq \left( \nicefrac{1}{2} \right) \phi \left(U(\pi'_s)\right) + \left( \nicefrac{1}{2} \right) \phi \left(U(\pi'_{s'})\right)\,.
\]
Since $\phi$ is continuous, this midpoint concavity implies global weak concavity of $\phi$.

Finally, we deal with the uniqueness of the functions $u$ and $\phi$. First, $u$ is obtained in a vNM framework, which implies that $u$ and $u’$ represent the same preferences if there exist $a > 0$ and $b \in \real$ such that $u' = a u + b$. We have previously established that $H$, hence also $\phi$, is defined up to an increasing affine transformation. Hence $\phi$ and $\phi’$ represent the same preferences if there exist $A > 0$ and $B \in \real$ such that $\phi'(u') = A \phi(u) + B$. The proof of the consistency of $W_{\phi, u, q}$ with the considered axioms is left to the reader. \qed

\noindent \textbf{Proof of Theorem~\ref{theo-main}}. Assume that the axioms stated in the theorem are satisfied. Consistency with A\ref{ax-ICEO2} implies consistency with A\ref{ax-ICEO}. From theorem~\ref{theo-general}, $f \Wpref f'$ implies $\phi^{-1} \left(  \sum_{s \in \states} q(s) \phi \left( U(\pi_s) \right) \right) \geq \phi^{-1} \left(  \sum_{s \in \states} q(s) \phi \left( U(\pi'_s) \right) \right)$, with $\phi$ a strictly increasing function. Let $u = {(u_s)}_{s \in \states}$ and $u' = {(u'_s)}_{s \in \states}$ with, respectively, $u_s = U(\pi_s)$ and $u'_s = U(\pi'_s)$, for all $s \in \states$. We can define a relation $\Wpref^\star$ over $\real^{|\states|}$ induced by $\Wpref$ such that, for all $f, f' \in \SocietiesFixed$, we have $u \Wpref\!^\star u'\ \Longleftrightarrow\ f \Wpref f'$. By construction, $\Wpref^\star$ is represented by the function $H_{\phi,u,q}(u) = \phi^{-1} \left( \sum_{s \in \states} q(s) \phi \left( u_s \right) \right)$. Then, define $c = U(\pi)$ and $c' = U(\pi')$. Because $U$ is affine, A\ref{ax-ICEO2} implies that, for all $f,f' \in \SocietiesFixed$, all $\pi, \pi' \in \Delta(\incomes)$ and all $\alpha \in (0,1)$:
\[
\begin{aligned}
& \phi^{-1} \left( \sum_{s \in \states} q(s)\phi \left(\alpha u_s + (1-\alpha) c\right) \right) \geq \phi^{-1} \left( \sum_{s \in \states} q(s)\phi \left(\alpha u'_s + (1-\alpha) c\right) \right)\\
\Longleftrightarrow\quad & \phi^{-1} \left( \sum_{s \in \states} q(s)\phi \left(\alpha u_s + (1-\alpha) c'\right) \right) \geq \phi^{-1} \left( \sum_{s \in \states} q(s)\phi \left(\alpha u'_s + (1-\alpha) c'\right) \right)\,.
\end{aligned}
\]
By letting $1_{|\states|}$ be a vector of one repeated $|\states|$ times, it is equivalent to write:
\[
\alpha u + (1-\alpha) c \cdot 1_{|\states|}\ \Wpref\!^\star\ \alpha u' + (1-\alpha) c \cdot 1_{|\states|}\ \Longleftrightarrow\ \alpha u + (1-\alpha) c' \cdot 1_{|\states|}\ \Wpref\!^\star\ \alpha u' + (1-\alpha) c' \cdot 1_{|\states|}\,.
\]
Let $v = \alpha u + (1-\alpha) c \cdot 1_{|\states|}$, $v' = \alpha u' + (1-\alpha) c \cdot 1_{|\states|}$, and $k = (1-\alpha)(c'-c)$. After simplification: 
\[
v\ \Wpref\!^\star\ v'\ \Longleftrightarrow\ v + k \cdot 1_{|\states|}\ \Wpref\!^\star\ v' + k \cdot 1_{|\states|}\,.
\]
From theorem 3, page 257 in \cite{BD82}, we know that $\Wpref\!^\star$ can be represented by the function $H_{\phi_\theta,u,q}(u) = \phi^{-1}_\theta \left(  \sum_{s \in \states} q(s) \phi_\theta (u_s) \right)$, with $\phi_\theta$ as defined in~(\ref{expo-transfo}). The restriction on the domain of $\theta$ results from the concavity of $\phi_\theta$ (and A\ref{ax-aversion}).  We know that the function $u$ can be replaced by $u' = a u + b$, with $a > 0$ and $b \in \real$.
If $H_{\phi_\theta,u,q}$ and $H_{\phi_{\theta'}, u', q}$ represent the same preferences, we have $H_{\phi_{\theta'}, u', q} = \phi^{-1}_{\theta'} \left(  \sum_{s \in \states} q(s) \left( - e^{- \theta'( a u_s + b)} \right) \right) = \phi^{-1}_{\theta'} \left( e^{- b} \sum_{s \in \states} q(s) \left( - e^{- \theta'a u_s} \right) \right)$. Therefore, we must also have $\theta' = \theta / a$. To sum up, $\Wpref$ can be represented by the function $W_{\phi_\theta, u, q}(f) = \phi^{-1}_\theta \left(  \sum_{s \in \states} q(s) \phi_\theta \left( U(\pi_s) \right) \right)$. All the axioms considered here are statisfied by this representation. \qed

\noindent \textbf{Proof of Theorem~\ref{theo-ln}}. Assume that the axioms stated in the theorem are satisfied. From theorem~\ref{theo-main} we know that, under A\ref{ax-Pareto}-A\ref{ax-aversion} and A\ref{ax-ICEO2}, preference $\Wpref$ can be represented by the function $V_{\theta, u, q}$. Because~$u$ is defined up to a positive affine transformation we assume, without loss of generality, that $u \geq 0$. Consider $\pi, \pi' \Delta(\incomes)$ as in~A\ref{ax-mon}, so that $\pi \succ \pi'$. This is equivalent to $\sum_{y \in \incomes} \pi_s(y) u(y) > \sum_{y \in \incomes} \pi'_s(y) u(y)$ or, by definition, to $\epsilon (u(x')-u(x)) > u(x) + u(x') \geq 0$ with $x < x'$ and $\epsilon > 0$. Therefore, A\ref{ax-mon} implies that $u$ is strictly increasing. 

Now, assume A\ref{ax-scale}, let $\lambda > 0$ and consider two profiles $f,f' \in \SocietiesFixed$ such that $f \Wpref\, f'$. By definition of $\lambda f$, we have $ V_{\theta, u, q}(\lambda f) = \phi^{-1}_\theta \left(  \sum_{s \in \states} q(s) \phi_\theta \left( \sum_{y \in \incomes} \pi^\lambda_s(\lambda y) u(\lambda y) \right) \right)$. Recalling that $\pi^\lambda_s(\lambda y) = \pi_s(y)$ and denoting $u^\lambda(y) = u(\lambda y)$, we have $V_{\theta, u, q}(\lambda f) = V_{\theta, u^\lambda, q}(f)$. Using A\ref{ax-scale}, $f \Wpref\, f' \Leftrightarrow \lambda f \Wpref\, \lambda f'$, which is equivalent to say that $V_{\theta, u, q}$ and $V_{\theta, u^\lambda, q}$ represent the same preference $\Wpref$. From theorem~\ref{theo-main}, we know that two functions $V_{\theta, u, q}$ and $V_{\theta', u', q}$ represent the same preference iff there exist $a > 0$ and $b \in \real$ such that $u' = a u + b$ and $\theta' = \theta / a$. Let $V_{\theta', u', q} = V_{\theta, u^\lambda, q}$. It follows that $u^\lambda$ has to be an affine transformation of $u$, such that  $u^\lambda (y) = u(\lambda y) = a(\lambda) u(y) + b(\lambda)$ for some $a > 0$ and $b \in \real$. Recalling that $y,\lambda > 0$, the solution of this functional equation is any increasing affine transformation of $u_\sigma(y)$ as introduced in~(\ref{util-CRRA}) \cite[see Equation~(42), Page 159, in][]{A66}. Note that we cannot consider $u_\sigma(y) = - \ln{y}$ and $u_\sigma(y)$ with $\sigma \in (-\infty,0)$ as solutions, because $u$ has to be increasing (A\ref{ax-mon}). The second restriction is  $\theta' = \theta / a(\lambda)$. Because $V_{\theta', u', q'} = V_{\theta, u^\lambda, q}$, we have $\theta = \theta / a(\lambda)$. If $\theta = 0$, this condition is satisfied and~(\ref{util-CRRA}) is the global solution. If $\theta > 0$, then $a(\lambda) = 1$, and we must have $u^\lambda (y) = u(\lambda y) = u(y) + \beta(\lambda)$, whose solution reduces to $u(y) = \ln{y}$. 

Conversely, it is not difficult to establish that $u_\sigma$ is sufficient to satisfy A\ref{ax-mon}-A\ref{ax-scale} by observing that, when $u_\sigma(y) = \ln{y}$ and $\theta \geq 0$, $U(\pi^\lambda_s) = U(\pi_s) + \ln{\lambda}$ and $V_{\theta, \ln, q}(\lambda f) = V_{\theta, \ln, q}(f) + \ln{\lambda}$. \qed

\noindent \textbf{Proof of Theorem~\ref{theo-extended}}. Assume that the axioms stated in the theorem are satisfied. Relation $\Wpref^\star$ is assumed to be a  continuous weak order, hence representable by a continuous function $H : \domain \rightarrow \real$, so that $(f,q) \Wpref (f',q')\ \Longleftrightarrow\ H(f,q) \geq H(f',q')$. In theorem~\ref{theo-general} we have established that local preference~$\Wpref$ is represented by $W(f) = \phi^{-1} \left( \sum_{s \in \states} \overline{q}(s) \phi \left( U(\pi_s) \right) \right)$, with $\overline{q}(s) = 1 / |\states|$ for all $s \in \states$. A\ref{ax-cons} implies, for all $f, f' \in \SocietiesFixed$, that $(f,\overline{q}) \Wpref^\star (f',\overline{q})\  \Longleftrightarrow\ f \Wpref f'$. Thus, 
the restriction of $H$ to uniform societies must be ordinally equivalent to $W$. We can therefore calibrate $H$ such that, for any uniform society $(f,\overline{q}) \in \domain$, we have:
\begin{equation} \label{eq-extended}
  H(f,\overline{q}) = \phi^{-1} \left( \sum_{s \in \states} \overline{q}(s) \phi \left( U(\pi_s) \right) \right)\,.  
\end{equation}

Consider now an arbitrary society $(f,q) \in \domain$ defined on a partition $\states$, where the distribution $q \in \Delta(\states)$ consists of rational probabilities. Let $K \in \Ninteger$ be a common denominator for all $q(s)$ such that for each $s \in \states$, we can write $q(s) = k_s / K$ with $k_s \in \Ninteger$. By applying A\ref{ax-split} iteratively, we can partition each type $s \in \states$ into $k_s$ distinct sub-types. Each new sub-type is assigned a demographic weight of $1/K$ and receives the exact same conditional income distribution $\pi_s$ as the original type $s$. This procedure maps the original society $(f,q)$ to a new society $(f', \overline{q}') \in \domain$ defined on a finer partition $\states'$ of size $|\states'| = K$. The distribution $\overline{q}'$ is uniform over $\states'$, so that $\overline{q}'(s) = 1/K$ for all $s \in \states'$. A\ref{ax-split} guarantees that this procedure leaves the extended evaluation unchanged. Hence $(f,q) \sim^\star (f',\overline{q}')$, which implies $H(f,q) = H(f',\overline{q}')$. Since $\overline{q}'$ is a uniform society, we evaluate it using the representation in~(\ref{eq-extended}):
\begin{equation} \label{eq-extended2}
  H(f',\overline{q}') = \phi^{-1} \left( \sum_{s \in \states'} \frac{1}{K} \phi \left( U(\pi_s) \right) \right)\,.  
\end{equation}
Grouping the terms of this sum back according to the original types $s \in \states$ (since each original type $s$ was split into $k_s$ identical sub-types), the sum simplifies to:
\begin{equation} \label{eq-extended3}
  H(f,q) = \phi^{-1} \left( \sum_{s \in \states} \frac{k_s}{K} \phi \left( U(\pi_s) \right) \right) = \phi^{-1} \left( \sum_{s \in \states} q(s) \phi \left( U(\pi_s) \right) \right)\,.  
\end{equation}

Finally, the set of rational distributions is dense in the probability simplex $\Delta(\states)$. Since the extended preference relation $\Wpref^\star$ is a continuous weak order, the representation $H(f,q)$ must be continuous with respect to $q$. Therefore, the functional form derived in~(\ref{eq-extended3}) extends to all real-valued distributions $q \in \Delta(\states)$. In conclusion, we immediately see that $W_{\phi,u}(f,q)$ satisfies A\ref{ax-cons} and A\ref{ax-split}. \qed

\noindent \textbf{Proof of Theorem~\ref{theo-main2}}. When $\theta = 0$, the expression of $V_{\theta, u}(f,q)$ in~(\ref{HS-representation}) is immediately deduced from~(\ref{expo-transfo}). Now assume that $\theta > 0$. Let $H_{\theta, u}(f,q,p) = \sum_{s \in \states} p(s) U(\pi_s) + \frac{1}{\theta} D_{KL}(p || q)$, with $p \in \Delta^q(\states)$. We first prove that $p^\star_{\theta, u}(f,q;\cdot) = \argmin_{p \in \Delta^q(\states)} H_{\theta, u}(f,q,p)$ can be written as in~(\ref{opt-p}). We then establish that $H_{\theta, u}(f,q,p^\star_{\theta, u}(f,q;\cdot)) = V_{\theta, u}(f,q)$.

Because $p \in \Delta^q(\states)$, we have $p^\star_{\theta, u}(f,q;s) = 0$ as soon as $q(s)=0$. Now, denote by $\overline{\states} \subseteq \states$ the set of $s$ such that $q(s)>0$. The Lagrangian of this minimisation problem, with $\nu = {(\nu_s)}_{s \in \overline{\states}}$, can be written as:
\[
L(p, \lambda, \nu) = \sum_{s \in \overline{\states}}  p(s) \left[ U(\pi_s) + \frac{1}{\theta} \ln{\frac{p(s)}{q(s)}} \right] + \lambda \left[ \sum_{s \in \overline{\states}}p(s) - 1 \right] - \sum_{s \in \overline{\states}}\nu_s p(s)\,.
\]
For all $s \in \overline{\states}$, the first Karush-Kuhn-Tucker (KKT) optimality condition $\frac{\partial L(p, \lambda, \nu)}{\partial p(s)} = 0$ leads to:
\begin{equation} \label{KKT-1}
   p^\star_{\theta, u}(f,q;s) = q(s) \exp \left(\theta \left( - U(\pi_s) - \lambda + \nu_s \right) - 1\right)\,. 
\end{equation}
Because $q(s)>0$, the right hand-side is positive, hence $p^\star_{\theta, u}(f,q;s) >0$. Another KKT condition needs $\nu_s p(s) = 0$, hence $\nu_s = 0$. Moreover, $\sum_{t \in \overline{\states}} p^\star_{\theta, u}(f,q;t) = 1$, which is equivalent, by using~(\ref{KKT-1}), to:
\begin{equation} \label{KKT-2}
   \exp \left(- \theta \lambda - 1\right) = \frac{1}{\sum_{t \in \overline{\states}} q(t) \exp \left(- \theta U(\pi_t) \right)}\,. 
\end{equation}
One also knows from~(\ref{KKT-1}) that $p^\star_{\theta, u}(f,q;s) = q(s) \exp \left(- \theta U(\pi_s)\right) \exp \left(- \theta\lambda - 1\right)$. Hence, $p^\star_{\theta, u}(f,q;s)$ as defined in~(\ref{opt-p}) is obtained by reintroducing this last expression in~(\ref{KKT-2}). The other KKT conditions are satisfied. Let's now prove that $H_{\theta, u}(f,q,p^\star_{\theta, u}(f,q;\cdot)) = V_{\theta, u}(f,q)$. By using $p^\star_{\theta, u}(f,q;s)$, we have: 
\begin{align*}
H_{\theta, u}(f,q,p^\star_{\theta, u}(f,q;\cdot)) &= \sum_{s \in \states}  p^\star_{\theta, u}(f,q;s) \left[ U(\pi_s) + \frac{1}{\theta} \ln \left( \frac{\exp \left(- \theta U(\pi_s) \right)}{\sum_{t \in \states} q(t) \exp \left(- \theta U(\pi_t) \right)} \right) \right] \\
&= - \frac{1}{\theta} \ln \left( \sum_{t \in \states} q(t) \phi_\theta \left( U(\pi_t) \right) \right) \times \sum_{s \in \states}  p^\star_{\theta, u}(f,q;s)\,.
\end{align*}
The result is deduced from $\sum_{s \in \states}  p^\star_{\theta, u}(f,q;s) = 1$, and $\phi^{-1}_\theta(t) = - (\nicefrac{1}{\theta}) \ln (-t)$ when $\theta > 0$. \qed

\noindent \textbf{Proof of Theorem~\ref{theo-main3}}. When $\theta = 0$, the expression of $V_{\theta, u}(f,q) = \expect_q [U]$ is immediately deduced from~(\ref{expo-transfo}). For $\theta > 0$, we first need a definition. The `cumulant-generating function' of a real-valued random variable $X$ can be written, for any $\tau \in \real$, by $K_X(\tau) = \ln \left( \expect [e^{\tau X}] \right)$. By definition, such a function is convex. If we denote by $K_X^{(n)}$ the n-th derivative of $K_X$, we can obtain the cumulant-generating function by a power series expansion, so that $K_X(\tau) = \sum_{n=1}^{\infty} K_X^{(n)}(0) \frac{\tau^n}{n !}$.  
The cumulants are sufficient statistics to characterize the full distribution of the random variable $X$, with the first two cumulants $K_X^{(1)}(0) = \expect [X]$ and $K_X^{(2)}(0) = \var(X)$. 

If we let $U = {(U(\pi_s))}_{s \in \states}$ be the list of expected utilities of all types in $f \in \SocietiesFixed$, distributed according to $q = {(q(s))}_{s \in \states}$, we have $V_{\theta, u}(f,q) = \left( - \nicefrac{1}{\theta} \right) \ln \left( \expect_q \left[ e^{- \theta U} \right] \right)$ (see theorem~\ref{theo-main}). Therefore, we can write $V_{\theta, u}(f,q) = (-\nicefrac{1}{\theta}) K_{U}(-\theta) = \sum_{n=1}^{\infty} K_{U}^{(n)}(0) \frac{{(-\theta)}^{n-1}}{n !}$. By limiting the expansion of $K_{U}$ to order 2, one obtains the approximation $V_{\theta, u}(f,q) \approx \expect\!_q [U] - \left(\nicefrac{\theta}{2}\right) \var_q(U)$.

Then, from definition~\ref{def-Bregman}, $D_{K_U}(-\theta||0) = K_U(-\theta) - K_U(0) +\theta K_U^{(1)}(0)$. Moreover, by definition, $K_U(0) = 0$. Because $V_{\theta, u}(f,q) = (-\nicefrac{1}{\theta}) K_{U}(-\theta)$, we also have $V_{\theta, u}(f,q)  = \expect\!_q[U] - \left(\nicefrac{1}{\theta}\right) D_{K_U}(-\theta||0)$. \qed

\noindent \textbf{Proof of Theorem~\ref{theo-stoch_dom}}. We know that $V_{\theta, u}(f,q) = \phi^{-1}_\theta \left( \expect\!_q \left[ \phi_\theta (U) \right] \right)$, with $U = {(U(\pi_s))}_{s \in \states}$. Because $\phi_\theta^{-1}$ is strictly increasing, $V_{\theta, u}(f,q) \geq V_{\theta, u}(f',q')$ if and only if $\expect_q [\phi_\theta(U)] \geq  \expect_{q'} [\phi_\theta(U')]$. Moreover, because $u$ can be assumed positive, $U(\pi_s) \geq 0$ for all $\pi_s \in \Delta(\incomes)$. Thus, $U$ and $U'$ are non-negative random variables. It follows that (i) $\Leftrightarrow$ (ii) can be deduced from theorem 5.A.4 in \cite{SS07} and theorem 1 in \cite{BG87} \cite[see also][]{Wh89}. (i) $\Rightarrow$ (ii) is immediate as $\phi_\theta$ is completely alternating for all $\theta \geq 0$. Let's prove (ii) $\Rightarrow$ (i). If $\varphi$ is completely alternating, then $\varphi^{(1)}$ is completely monotone and, from Hausdorff–Bernstein–Widder theorem with $t \geq 0$, we know that $\varphi^{(1)}(t) = \int_0^{\infty} \exp{(-rt)} m(\textnormal{d}r)$ for some positive measure $m$ on $(0,\infty)$. Without loss of generality, assume that $\varphi(0)=0$.~\footnote{If this is not true, we have to apply the following reasoning to the normalized function $\varphi^\star(t) = \varphi(t+t_0) - \varphi(t_0)$, for any $t_0$ such that $\varphi(t_0) \in (-\infty,\infty)$ (in order to have $\varphi^\star(0)=0$), before going back to $\varphi(t)$ \cite[details in][]{BG87}.} By integration over $[0,z]$ and reversing the order of integration, we deduce that 
$\varphi(z) = \int_0^{\infty} {\left[ \left( \nicefrac{-1}{r} \right) \exp{(-rt)} \right]}_0^z m(\textnormal{d}r) = \int_0^{\infty} \left( \nicefrac{1}{r} \right) \left[ 1 - \exp{(-rz)} \right] m(\textnormal{d}r)$ if $r > 0$, and $\varphi(z) = z$ otherwise. Statement (ii) is equivalent to $\expect\!_q [- \exp{(- \theta U)}] \geq  \expect\!_{q'} [- \exp{(-\theta U')}]$ for all $\theta \in (0,\infty)$ and $\expect\!_q [U] \geq \expect\!_{q'} [U']$ if $\theta = 0$. Assume first that $\theta = 0$. If (ii) is true, it follows that $\expect\!_q [\varphi(U)] \geq \expect\!_{q'} [\varphi(U')]$ when $\varphi(z) = z$. Then, for all $\theta > 0$, statement (ii) implies that: 
\begin{equation} \label{eq-bernstein}
    \expect\!_q \left[ \left( \nicefrac{1}{\theta} \right) \left[ 1 - \exp{(-\theta U)} \right] \right] \geq  \expect\!_{q'} \left[ \left( \nicefrac{1}{\theta} \right) \left[ 1 - \exp{(-\theta U')} \right] \right]\,.
\end{equation}
For any completely alternating but non-linear $\varphi$ function, one deduces that:
\begin{flalign} \label{eq-bernstein2}
\expect\!_q [\varphi(U)] &\overset{\text{def}}{=} \expect\!_q \left[ \int_0^{\infty} \left( \nicefrac{1}{r} \right)
  \left[ 1 - \exp{(-r U)} \right] \, m(\textnormal{d}r) \right] \nonumber\\
&= \int_0^{\infty} \expect\!_q \left[ \left( \nicefrac{1}{r} \right)
  \left[ 1 - \exp{(-r U)} \right] \right] m(\textnormal{d}r) \nonumber\\
&\overset{\text{(\ref{eq-bernstein})}}{\geq} \int_0^{\infty} \expect\!_{q'} \left[ \left( \nicefrac{1}{r} \right)
  \left[ 1 - \exp{(-r U')} \right] \right] m(\textnormal{d}r) \nonumber\\
&= \expect\!_{q'} \left[ \int_0^{\infty} \left( \nicefrac{1}{r} \right)
  \left[ 1 - \exp{(-r U')} \right] \, m(\textnormal{d}r) \right]
\overset{\text{def}}{=} \expect\!_{q'} [\varphi(U')]\,,
\end{flalign}
Therefore, (ii) implies that statement (i) is true. \qed 

\noindent \textbf{Proof of Property~\ref{prop-V_positive}}. Let $y \in \incomes$. By assumption $u(y) \geq 0$, thus $U(\pi_s) \geq 0$ for all $s \in \states$. If $\theta = 0$, then $V_{0, u}(f,q) = \sum_{s \in \states} q(s) U(\pi_s) \geq 0$. Now assume that $\theta > 0$. We have $\exp(-\theta U(\pi_s)) \leq \exp(0) = 1$. It follows that $\sum_{s \in \states} q(s) \exp(-\theta U(\pi_s)) \in (0,1]$, and that $\ln \left( \sum_{s \in \states} q(s) \exp(-\theta U(\pi_s)) \right) \in (-\infty, 0]$. Finally, $V_{\theta, u}(f,q)$ is equal to this last term, times $- (\nicefrac{1}{\theta})$, thus $V_{\theta, u}(f,q) \geq 0$. \qed

\noindent \textbf{Proof of Property~\ref{prop-derivative_EU}}. Let $\tilde{V}_{\theta, u}(f,q) = \sum_{t \in \states} q(t) \exp(-\theta U(\pi_t))$, so that $V_{\theta, u}(f,q) = - \left( \nicefrac{1}{\theta}\right) \ln{\tilde{V}_{\theta, u}(f,q)}$. One observe that $\nicefrac{\partial \tilde{V}_{\theta, u}}{\partial U(\pi_s)} = - \theta q(s) \exp(-\theta U(\pi_s))$. Thus:
\[
\frac{\partial V_{\theta, u}}{\partial U(\pi_s)}(f,q) = - \frac{1}{\theta} \frac{- \theta q(s) \exp(-\theta U(\pi_s))}{\tilde{V}} = p^\star_{\theta, u}(f,q;s)\,.
\]
\qed

\noindent \textbf{Proof of Property~\ref{prop-derivative_u}}. We have $U(\pi_s) = \sum_{y \in \incomes} \pi_s(y) u(y)$. Let $\tilde{V}_{\theta, u}(f,q) = \sum_{s \in \states} q(s) \exp(-\theta U(\pi_s))$, such that $V_{\theta, u}(f,q) = - \left( \nicefrac{1}{\theta}\right) \ln{\tilde{V}_{\theta, u}(f,q)}$. One has $\nicefrac{\partial \tilde{V}_{\theta, u}}{\partial u(y)} = - \theta \sum_{s \in \states} \pi_s(y) q(s) \exp \left(- \theta U(\pi_s) \right)$. It follows that:
\[
\frac{\partial V_{\theta, u}}{\partial u(y)}(f,q) = - \frac{1}{\theta} \frac{\nicefrac{\partial \tilde{V}_{\theta, u}}{\partial u(y)}}{\tilde{V}} = \sum_{s \in \states} \left[ \frac{q(s) \exp \left(- \theta U(\pi_s) \right)}{\tilde{V}}\right] \pi_s(y) = \sum_{s \in \states} p^\star_{\theta, u}(f,q;s) \pi_s(y)\,.
\]
\qed

\noindent \textbf{Proof of Property~\ref{prop-extreme}}.  The case $V_{0, u}(f,q)$ is stated in equation~(\ref{HS-representation}). With theorem~\ref{theo-main2}, it suffices to calculate $\lim_{\theta \rightarrow \infty} p^\star_{\theta, u}(f,q;s)$. Let $U_{\textnormal{min}} = \min_{s \in \states} U(\pi_s)$ and $\states_{\textnormal{min}} = \{s \in \states\,|\, U(\pi_s) = U_{\textnormal{min}} \}$. By multiplying and dividing $p^\star_{\theta, u}(f,q;s)$ by $\exp(\theta U_{\textnormal{min}})$, one has:
\[
p^\star_{\theta, u}(f,q;s) = \frac{q(s) \exp\left(\theta (U_{\textnormal{min}}-U(\pi_s))\right)}{\sum_{t \in \states} q(t) \exp\left( \theta (U_{\textnormal{min}}-U(\pi_t))\right)} = \frac{q(s) \exp\left(\theta (U_{\textnormal{min}}-U(\pi_s))\right)}{\sum_{t \in \states_{\textnormal{min}}} q(t) + \sum_{t \notin \states_{\textnormal{min}}} q(t) \exp\left( \theta (U_{\textnormal{min}}-U(\pi_t))\right)}\,.
\]
Moreover, for all $t \notin \states_{\textnormal{min}}$, we have $\lim_{\theta \rightarrow \infty} \exp\left( \theta (U_{\textnormal{min}}-U(\pi_t))\right) = \exp(-\infty) = 0$. It follows that:    
\begin{equation*}
    \lim_{\theta \rightarrow \infty} p^\star_{\theta, u}(f,q;s) =  
    \begin{cases} 
        \frac{q(s)}{\sum_{t \in \states_{\textnormal{min}}} q(t)}\,, & \text{for } s \in \states_{\textnormal{min}}\,, \\
        0\,, & \text{for } s \notin \states_{\textnormal{min}}\,.
    \end{cases}
\end{equation*}
With~(\ref{HS-representation}), one deduces that
$\lim_{\theta \rightarrow \infty} V_{\theta, u}(f,q) = \lim_{\theta \rightarrow \infty} \sum_{s \in \states} p^\star_{\theta, u}(f,q;s) U(\pi_s) = U_{\textnormal{min}}$. 
\qed

\noindent \textbf{Proof of Theorem~\ref{theo-comparative}}. We first prove (i) $\Leftrightarrow$ (ii). If $\theta = 0$, then $V_{0, u}(f,q) = \sum_{s \in \states} q(s) U(\pi_s)$ and the equivalence is a result of Jensen's inequality. Now assume that $\theta, \theta' > 0$. We know from the proof of theorem~\ref{theo-main3} that $V_{\theta, u}(f,q) = \left(-\nicefrac{1}{\theta}\right) K_{U}(-\theta)$, hence:
\begin{equation} \label{derivative-V}
    \frac{\partial V_{\theta, u}(f,q)}{\partial \theta} = \frac{K_{U}(-\theta) + \theta K_{U}^{(1)}(-\theta)}{\theta^2}\,.
\end{equation}
By definition, and because we restrict our evaluation to societies where there exist $s, s' \in \states$ such that $U_s \neq U_{s'}$, the CGF $K_{U}(-\theta)$ is strictly convex. Thus $K_{U}(x) - K_{U}(y) - (x-y)K_{U}^{(1)}(y) > 0$ for all $x, y \in \real$, with $x \neq y$. Let $x = 0$ and $y = - \theta$. It follows that $K_{U}(0) > K_{U}(-\theta) + \theta K_{U}^{(1)}(-\theta)$. Moreover, $K_{U}(0) = \ln \left( \sum_{s \in \states}q(s) e^{-0 \times U(\pi_s)} \right) = 0$, thus the numerator of~(\ref{derivative-V}) is strictly negative, and $\nicefrac{\partial V_{\theta, u}(f,q)}{\partial \theta} < 0$.

Next, we prove (i) $\Rightarrow$ (iii). Assume $\theta < \theta'$. Consider an ooportunity profile $f = (\pi_{s},\pi_{s'}) \in \SocietiesFixed$, with $U(\pi_s) < U(\pi_{s'})$ and $q = (\nicefrac{1}{2},\nicefrac{1}{2})$. Let $p^\star_{\theta,u}(f,q;s) = p^\star_{\theta,u}(s)$ for all $s \in \states$. From theorem~\ref{theo-main2} we have, for $t=s,s'$:
\[
 p^\star_{\theta,u}(t) = \frac{\exp\left(-\theta U(\pi_t)\right)}{\exp\left(-\theta U(\pi_{s})\right) + \exp\left(-\theta U(\pi_{s'})\right)}\,.
\]
If $\theta = 0$, then $p^\star_{\theta,u}(s) = p^\star_{\theta,u}(s') = \nicefrac{1}{2}$. If $\theta > 0$, then $U(\pi_s) < U(\pi_{s'})$ implies $p^\star_{\theta,u}(s) > p^\star_{\theta,u}(s')$. After simplification, we also have $p^\star_{\theta,u}(s) = {\left[ 1 + \exp\left(-\theta (U(\pi_{s'})-U(\pi_{s})\right) \right]}^{-1}$.
On deduces that:
\begin{equation} \label{eq-deriv}
\frac{\partial p^\star_{\theta,u}(s)}{\partial \theta} = \left(U(\pi_{s'}) - U(\pi_{s}) \right)  \times \exp\left(-\theta (U(\pi_{s'})-U(\pi_{s})\right) \times p^\star_{\theta,u}(s)^2 > 0\,.  
\end{equation}
Thus, $\theta < \theta'$ implies that $p^\star_{\theta,u}(s) < p^\star_{\theta',u}(s)$. Because $p^\star_{\theta,u}(s') = 1 - p^\star_{\theta,u}(s)$, we have $ \frac{\partial p^\star_{\theta,u}(s')}{\partial \theta} < 0$ and $p^\star_{\theta,u}(s') > p^\star_{\theta',u}(s')$. 

We now prove (iii) $\Rightarrow$ (i). Assume statement (iii) holds. Suppose, by contradiction, $\theta \geq \theta'$. If $\theta = \theta'$, then $p^\star_{\theta,u}(s) = p^\star_{\theta',u}(s)$, which contradicts the strict inequality in (iii). If $\theta > \theta'$, the strict monotonicity established by the positive derivative in~(\ref{eq-deriv}) implies $p^\star_{\theta,u}(s) > p^\star_{\theta',u}(s)$, which again contradicts (iii). Therefore, we must have $\theta < \theta'$. \qed 

\end{document}